\title{\texttt{BH-tsNET}, \texttt{FIt-tsNET}, \texttt{L-tsNET}: \\
Fast tsNET Algorithms for Large Graph Drawing}
\author{Amyra Meidiana, Seok-Hee Hong,  and Kwan-Liu Ma}
\abstract{%
The \texttt{tsNET} algorithm utilizes \texttt{t-SNE} to compute high-quality graph drawings, preserving the neighborhood and clustering structure of vertices. 
In this paper, we present three fast algorithms for reducing the time complexity of \texttt{tsNET} algorithm from $O(nm)$ time to $O(n \log n)$ time and $O(n)$ time.
To reduce the runtime of \texttt{tsNET}, there are three components that need to be reduced: (C0) computation of high-dimensional probabilities, (C1) computation of KL divergence gradient, and (C2) entropy computation. 
Specifically, we reduce the overall runtime of \texttt{tsNET}, integrating our new fast approaches for C0 and C2 with fast \texttt{t-SNE} algorithms for C1.
We first present $O(n \log n)$-time \texttt{BH-tsNET}, based on (C0) new  $O(n)$-time partial BFS-based high-dimensional probability computation and (C2) new $O(n \log n)$-time quadtree-based entropy computation, integrated with (C1) $O(n \log n)$-time quadtree-based KL divergence computation of \texttt{BH-SNE}.
We next present faster $O(n \log n)$-time \texttt{FIt-tsNET}, using (C0) $O(n)$-time partial BFS-based high-dimensional probability computation and (C2) quadtree-based $O(n \log n)$-time entropy computation, integrated with (C1) $O(n)$-time interpolation-based KL divergence computation of \texttt{FIt-SNE}.
Finally, we present the fastest $O(n)$-time \texttt{L-tsNET}, integrating (C2) new $O(n)$-time  FFT-accelerated interpolation-based entropy computation with (C0) $O(n)$-time partial BFS-based high-dimensional probability computation, and (C1) $O(n)$-time interpolation-based KL divergence computation of \texttt{FIt-SNE}.
Extensive experiments using benchmark data sets confirm that \texttt{BH-tsNET}, \texttt{FIt-tsNET}, and \texttt{L-tsNET} outperform \texttt{tsNET}, running 93.5\%, 96\%, and 98.6\% faster while computing similar quality drawings in terms of quality metrics (neighborhood preservation, stress, edge crossing, and shape-based metrics) and visual comparison.
We also present a comparison between our algorithms and DR graph, another dimension reduction-based graph drawing algorithm.
}
\keywords{Large Graph Drawing, \texttt{tsNET} }
\algnewcommand\algorithmicin{\textbf{Input}}
\newtheorem{hyp}{Hypothesis}
\newtheorem{theorem}{Theorem}
\mathchardef\mhyphen="2D 
\begin{document}


\firstsection{Introduction}

\maketitle

The \texttt{tsNET}~\cite{kruiger2017graph} algorithm for graph drawing utilizes the popular dimensional reduction (DR) algorithm \texttt{t-SNE} \emph{(t-Stochastic Neighborhood Embedding)}~\cite{van2008visualizing} that specifically aims to preserve the \emph{neighborhood} of data points in a low-dimensional projection, unlike most DR algorithms that focus on preserving similarity/distance. 
\texttt{t-SNE} models the distances of data points in the high- and low-dimensional spaces as probability distributions
and then computes a low-dimensional projection by minimizing the \emph{KL (Kullback-Leibler) divergence}~\cite{kullback1997information} between the two probability distributions,
typically using gradient descent, by moving the points in the projection according to the gradient of the KL divergence.
The runtime of \texttt{t-SNE} is $O(n^2)$, due to the computation of pairwise distances between all pairs of data points in both the high- and low-dimensional spaces.

\texttt{tsNET} uses the {\em graph theoretic distance} (i.e., shortest path) between vertices in a graph $G = (V, E)$ as the high-dimensional distance to be projected to a low-dimensional space, 
and then adds the following two more terms to the cost function, an $O(n)$-time \emph{compression} term to accelerate untangling the graph, and an $abstract$-time \emph{entropy} term for reducing clutter,
on top of the KL divergence term used by  \texttt{t-SNE}.
\texttt{tsNET} is able to compute high-quality graph drawings, preserving the neighborhood and clustering structure of vertices. However, its overall runtime is $O(nm)$ (where $n = |V|$, $m = |E|$), 
i.e., $O(n^3)$ time since $m = O(n^2)$ for dense graphs, therefore not scalable to larger graphs.

Recently, fast \texttt{t-SNE} algorithms have been introduced. For example, \texttt{BH-SNE} \emph{(Barnes-Hut SNE)}~\cite{van2014accelerating} utilizes $k$NN ($k$-Nearest Neighbors) and quadtree to reduce the computation of high-dimensional probabilities and KL divergence gradient to $O(n \log n)$ time.
More recently, \texttt{FIt-SNE} \emph{(FFT-accelerated Interpolation-based t-SNE)}~\cite{linderman2019fast} utilizes $O(n \log n)$-time approximate $k$NN to compute the high-dimensional probabilities, and a FFT (Fast Fourier Transform)~\cite{nussbaumer1981fast}-accelerated interpolation to further improve the runtime of the KL divergence gradient computation to $O(n)$ time.

However, algorithms reducing the overall runtime of \texttt{tsNET} have not been investigated yet. 
To reduce the runtime of \texttt{tsNET}, three components need to be reduced: (C0) computation of high-dimensional probabilities, (C1) computation of KL divergence gradient, and (C2) entropy computation (note that the \emph{compression} term runs in $O(n)$ time). 
While it was mentioned as possible future work to use fast \texttt{t-SNE} methods to improve the runtime of the KL divergence term~\cite{kruiger2017graph}, 
methods to reduce the runtime of the {\em entropy} term were not considered yet.
Therefore, there is a significant gap in the literature to address fast \texttt{tsNET} algorithms.

In this paper, we present three new fast \texttt{tsNET} algorithms for graph drawing, integrating our new fast approaches for C0 and C2 with fast \texttt{t-SNE} methods for C1 
to reduce the overall runtime of \texttt{tsNET}. 
First, \texttt{BH-tsNET} and \texttt{FIt-tsNET} incorporate our new partial BFS approach for C0 and quadtree-based entropy computation for C2 with the fast KL divergence computation of \texttt{BH-SNE} and \texttt{FIt-SNE} respectively for C1 to reduce the overall runtime to $O(n \log n)$.
Our fastest algorithm, \texttt{L-tsNET}, uses our new interpolation-based entropy computation for C2 on top of the partial BFS for C0 and the fast KL divergence computation of \texttt{FIt-SNE} for C1, reducing the overall runtime to $O(n)$.

\begin{table}[h]
    \centering
    \scriptsize
    \caption{Time Complexity of \texttt{BH-tsNET}, \texttt{FIt-tsNET} and \texttt{L-tsNET}  }
    \begin{tabular}{|c|c|c|c|c|}
        \hline 
        Algorithm & C0 & C1 & C2 & Overall runtime \\ \hline
        \texttt{tsNET} & $O(nm)$ & $O(n^2)$ & $O(n^2)$ & $O(nm)$ \\ \hline
            \texttt{BH-tsNET} & $O(n)$ & $O(n \log n)$ & $O(n \log n)$ & $O(n \log n)$ (Thm. \ref{theorem:bhtsnet}) \\ \hline
        \texttt{FIt-tsNET} & $O(n)$ & $O(n)$ & $O(n \log n)$ & $O(n \log n)$ (Thm. \ref{theorem:fittsnet}) \\ \hline
        \texttt{L-tsNET} & $O(n)$  & $O(n)$  & $O(n)$  & $O(n)$ (Thm. \ref{theorem:ltsnet})  \\ \hline
    \end{tabular}
    \label{tab:runtime_tsnet}
\end{table}

Table \ref{tab:runtime_tsnet} shows   
the theoretical time complexity analysis of our fast algorithms against the original \texttt{tsNET} per component.
%
Specifically, the main contribution of this paper can be summarized as follows:

\begin{enumerate}

    \item 
    We present $O(n \log n)$-time \texttt{BH-tsNET}, which uses new $O(n)$-time partial BFS-based high-dimensional probability computation for C0 and quadtree-based $O(n \log n)$-time entropy computation for C2, integrated with $O(n \log n)$-time \texttt{BH-SNE} quadtree-based KL divergence computation of for C1.

     \item 
     We present $O(n \log n)$-time \texttt{FIt-tsNET}, which uses new $O(n)$-time partial BFS-based high-dimensional probability computation for C0,  quadtree-based $O(n \log n)$-time entropy computation for C2, integrated with $O(n)$-time \texttt{FIt-SNE} interpolation-based KL divergence computation for C1.

    \item 
    We finally present the fastest $O(n)$-time \texttt{L-tsNET}, which uses new $O(n)$-time  FFT-accelerated interpolation-based entropy computation for C2 with $O(n)$-time partial BFS-based high-dimensional probability computation for C0, integrated with $O(n)$-time interpolation-based KL divergence computation of \texttt{FIt-SNE} for C1.
    
    \item We implement our algorithms and evaluate against \texttt{tsNET} using a variety of benchmark data sets on runtime, quality metrics, and visual comparison. 
    Extensive experiments confirm that \texttt{BH-tsNET}, \texttt{FIt-tsNET}, and \texttt{L-tsNET} outperform  \texttt{tsNET}, running 93.5\%, 96\%, and 98.6\% faster while computing similar quality drawings in terms of quality metrics (neighborhood preservation, stress, edge crossing, and shape-based metrics) and visual comparison.

    \item We also compare our algorithms to \texttt{DRGraph} ~\cite{zhu2020drgraph}, another linear-time \texttt{t-SNE}-based graph drawing algorithm with a different optimization function.     Experiments show that our algorithms especially perform better over \texttt{DRGraph} in visualizing graphs with a regular structure, such as mesh graphs.
\end{enumerate}

\section{Related Work}
\label{sec:litrev}

\subsection{tsNET Algorithm for Graph Drawing}

The \texttt{tsNET}~\cite{kruiger2017graph} algorithm optimizes a cost function $C$ consisting of the following three terms: 

\begin{equation}
\label{eq:tsnet}
\begin{split}
C = C_{KL} + C_{CMP} + C_{ENT}, \\
C_{KL} = \lambda_{KL} \sum_{i \neq j} p_{ij} \log \frac{p_{ij}}{q_{ij}},\\
C_{CMP} = \frac{\lambda_c}{2n}\sum_i ||X_i||^2, \\
C_{ENT} = - \frac{\lambda_r}{2n^2} \sum_{i \neq j} \log (||X_i - X_j|| + \epsilon)
\end{split}
\end{equation}

The first term $C_{KL}$ is the KL divergence optimized by \texttt{t-SNE} \emph{(t-Stochastic Neighborhood Embedding)}~\cite{van2008visualizing}, 
a DR algorithm that aims to preserve the neighborhood of data points by modeling the pairwise similarities of the objects in high-dimensional space and low-dimensional projection as probability distributions, where objects that are closer or more similar have a higher probability.
The low-dimensional projection is then computed by minimizing the divergence between the two distributions. 

Specifically, for a set of high-dimensional objects $V$, $|V|=n$, the high-dimensional probability $p_{ij}$ measuring the similarity between two objects $v_i$ and $ v_j$ is modeled as: 

\begin{equation}
    p_{ij} = \frac{p_{j|i} + p_{i|j}}{2n}
    \label{eq:pij}
\end{equation}

\noindent
computed using the conditional probability $p_{j|i}$: 

\begin{equation}
p_{j|i} = \exp \left ( -\frac{d^2_{ij}}{2\sigma^2_i} \right ) / \sum_{k \neq i} \exp \left ( -\frac{d^2_{ik}}{2\sigma^2_i} \right )
\label{eq:pijcond}
\end{equation}

\noindent
where $d_{ij}$ is the high-dimensional distance between $v_i$ and $v_j$, while the standard deviation $\sigma_i$ is selected in order to produce a \emph{perplexity} value of $u = 2^{-\sum_j p_{j|i} \log_2 p_{j|i}}$, for a constant $u$. 

The low-dimensional probability $q_{ij}$ modeling the distance between the projections of $v_i$ and $v_j$ in low-dimensional space, $X_i$ and $X_j$ respectively, is computed as:

\[
q_{ij} = q_{ji} = \frac{(1 + || X_i - X_j||^2)^{-1}}{ \sum_{k \neq l} (1 + || X_k - X_l||^2)^{-1}}
\]

The projection is computed by minimizing the \emph{KL (Kullback-Leibler)}~\cite{kullback1997information} divergence between the distributions of $p_{ij}$ and $q_{ij}$:

\begin{equation}
    KL = \sum_{i \neq j} p_{ij} \log \frac{p_{ij}}{q_{ij}}
    \label{eq:kl}
\end{equation}

\noindent
where KL divergence is minimized using gradient descent.

The second term $C_{CMP}$ is the \emph{compression}~\cite{van2008visualizing}, used to speed up convergence, computed as the average of the squared distance between each $X_i$, the position of a vertex $v_i$ in the drawing, and the origin.

The third term $C_{ENT}$ is \emph{entropy}~\cite{gansner2012maxent}, added to avoid clutter. 
It sums up the logarithm of the distance between each pair of vertices in the drawing with a small regularization constant $\epsilon$ added. When $\epsilon = 0$, the partial derivative with respect to a vertex $v_k$ is computed as: 

\begin{equation}
\frac{\delta}{\delta v_k}(C_{ENT}) = \frac{\lambda_r}{n^2} \sum_{i \neq k} \frac{X_k - X_i}{||X_k - X_i||^2}
\label{eq:entgrad}
\end{equation}
Multipliers $\lambda_{KL}, \lambda_c, \lambda_r$ are added to control the contributions of each term in different stages of the gradient descent process.

Overall, \texttt{tsNET} runs in $O(n^3)$ time, as the computation of high-dimensional probabilities, KL divergence, compression, and entropy terms takes $O(n^3)$, $O(n^2)$ and $O(n^2)$ time, respectively. 
Specifically, \texttt{t-SNE} runs in  $O(n^2)$-time for computing the distances between each pair of points in both the high- and low-dimensional spaces. 

\texttt{tsNET} has been evaluated against well-known graph layouts, such as SFDP~\cite{hu2005efficient}, LinLog~\cite{noack2003energy}, GRIP~\cite{gajer2002grip}, and NEATO~\cite{north2004drawing}, where it outperforms the other layouts on neighborhood preservation~\cite{kruiger2017graph}.
%
In this paper, we use \texttt{tsNET} to refer to $tsNET*$~\cite{kruiger2017graph}, denoting \texttt{tsNET} with a Pivot MDS (PMDS)~\cite{brandes2006eigensolver} initialization, for simplicity.

\subsection{Barnes-Hut SNE}


\texttt{BH-SNE} \emph{(Barnes-Hut SNE)}~\cite{van2014accelerating} reduces the runtime to $O(n \log n)$~\cite{van2014accelerating} using $k$NN and quadtree-based approximation. It first reformulates the KL divergence gradient as:

\begin{equation}
4 \left ( \sum_{j \neq i} p_{ij}q_{ij} Z(X_i - X_j) - \sum_{j \neq i} q^2_{ij} Z(X_i - X_j) \right )
\label{eq:klgrad}
\end{equation}

where $Z = \sum_{k\neq l} (1 + || X_k - X_l||^2)^{-1}$ is the normalization term. 
Compared to the {\em force-directed algorithm}~\cite{eades1984heuristic,battista1998graph}, the most popular graph drawing algorithm which consists of computing \emph{repulsion} and \emph{attraction} forces, the first term of Eq. \ref{eq:klgrad} is analogous to attraction forces, while the second term is analogous to repulsion forces.

The first ``attraction'' term of Eq. \ref{eq:klgrad} is reduced to $O(n \log n)$ time by computing the $k$NN of each object in high-dimensional space using vantage-point trees~\cite{yianilos1993data}. 
Then $p_{j|i}$ (Eq. \ref{eq:pijcond}) is computed only between an object $v_i$ and each object in $N_k(v_i)$, the set of its $k$ closest neighbors, and set to zero between $v_i$ and objects not part of $N_k(v_i)$.

The second ``repulsion'' term of Eq. \ref{eq:klgrad} is reduced by using \emph{quadtrees}, based on the Barnes-Hut algorithm~\cite{barnes1986hierarchical}. 
Specifically, a decomposition of the low-dimensional space is computed using a quadtree. For each point $X_i$, a depth-first search of the quadtree is performed starting from the root, and each quadtree node is checked to see if the cell can be used as a ``summary'' for all its constituent points; if so, the children of the node are not traversed and the low-dimensional probabilities between $X_i$ and all constituent points of the quadtree node are approximated using the center of mass of the quadtree cell.

\subsection{FFT-accelerated Interpolation-based t-SNE}

A more recent approach called \texttt{FIt-SNE} \emph{(FFT-accelerated Interpolation-based t-SNE)}~\cite{linderman2019fast} reduces the runtime of computing the second term of Eq. \ref{eq:klgrad} to $O(n)$ time using an interpolation-based method. 
Specifically, for a projection to two dimensions, the second term of Eq. \ref{eq:klgrad} can be expressed as four sums multiplying the low-dimensional probabilities with a kernel $K$, in the form of:

\[
    \sum^n_{j=1} K(X_i,X_j)q_{ij},
    \]
    \[ K(X_i,X_j) = \frac{1}{(1+||X_i - X_j||^2)} \text{ or } \frac{1}{(1+||X_i - X_j||^2)^{2}}
\]

By dividing the low-dimensional projection area into a constant number of $I$ ``intervals'' and defining a constant number of $P$ interpolant points inside each interval, the kernel $K$ computed between two intervals can be approximated in $O(n)$ time using the interpolant points instead of using the actual projected points. 
The computation of the interpolation is accelerated using FFT (Fast Fourier Transform)~\cite{nussbaumer1981fast}.

In addition, \texttt{FIt-SNE} accelerates the computation of high-dimensional probabilities by using \emph{approximate} 
$k$-nearest neighbors~\cite{bernhardsson2013annoy,muja2009fast}. While the practical runtime is often faster than exact $k$NN in practice, the worst-case runtime complexity is still $O(n \log n)$, resulting total runtime of \texttt{FIt-SNE} to still be $O(n \log n)$.

\section{\texttt{BH-tsNET}, \texttt{FIt-tsNET}, and \texttt{L-tsNET} Algorithms}
\label{sec:algo}

We now present  \texttt{BH-tsNET}, \texttt{FIt-tsNET}, and \texttt{L-tsNET}, improving the $O(n^3)$ runtime of \texttt{tsNET}
to $O(n \log n)$ and $O(n)$ time, respectively. 
To reduce the runtime of \texttt{tsNET}, we need to reduce the runtime of the following three components, each running in $O(n^3)$ or $O(n^2)$ time:

\begin{itemize}
    \item \textbf{C0:} Computation of high-dimensional probabilities (Eq. \ref{eq:pij}-\ref{eq:pijcond})
    \label{item:pij}
    \item \textbf{C1:} Computation of KL divergence gradient ($C_{KL}$ from Eq. \ref{eq:tsnet}) \label{item:kl}
    \item \textbf{C2:} Computation of entropy ($C_{ENT}$ from Eq. \ref{eq:tsnet}) \label{item:ent}
\end{itemize}

\emph{C0}, the computation of $p_{ij}$ (Eq. \ref{eq:pij}), can be considered as a pre-processing step, used in both \texttt{t-SNE} (Eq. \ref{eq:kl}) and \texttt{tsNET} (Eq. \ref{eq:tsnet}), as the high-dimensional probabilities $p_{ij}$ used in the first 
term of Eq. \ref{eq:tsnet} only needs to be computed once. 

\emph{C1} is part of the gradient descent loop (first 
term $C_{KL}$ of Eq. \ref{eq:tsnet}), used in both \texttt{t-SNE} and \texttt{tsNET}. 
Meanwhile, \emph{C2} is also part of the gradient descent loop (third term $C_{ENT}$ of Eq. \ref{eq:tsnet}), but only in \texttt{tsNET}. 
Note that compression computation (second term $C_{CMP}$ of Eq. \ref{eq:tsnet}) already runs in linear time, therefore no improvement is required.

For C1, the fast KL divergence gradient computation of fast \texttt{t-SNE} algorithms can be used. 
However, for C0, as \texttt{tsNET} uses shortest path distances to compute $p_{ij}$, reducing the runtime of C0 needs a new method. 
Likewise, a new method to reduce the runtime of C2 is needed as entropy is not included in fast \texttt{t-SNE} algorithms.

Therefore, we present our new fast \texttt{tsNET} algorithms, with our new approaches to reduce the runtimes of C0 and C2, integrated with  
fast \texttt{t-SNE} algorithms to reduce the runtime of C1.


\subsection{BH-tsNET}

\texttt{BH-tsNET} first uses our new partial BFS approach to compute the $k$NN of vertices in the graph for C0, reducing the runtime from $O(n^3)$ to $O(n)$. 
We then integrate the $O(n \log n)$ time quadtree approximation used by \texttt{BH-SNE} for C1. 
Finally, we use our new quadtree-based entropy for C2 to reduce the runtime from $O(n^2)$ to $O(n \log n)$.
See the details in Algorithm 
\ref{alg:bhtsnet}.

\begin{algorithm}
 \caption{\textbf{BH-tsNET}}
 \begin{algorithmic}[1]
 \State \textbf{Input:} Graph $G = (V,E)$, perplexity $u$.
 \State Compute initial layout $D$ of $G$ using Pivot MDS
 \State $k = 3u$ \label{line:knnprobstart}
 \For{$v \in V$}
    \State // C0: $p_{ij}$ computation
    \State Run BFS starting at $v$ until $k$ vertices are visited.
    \State $N_k(v)$: set of $k$NN of $v$ computed using BFS.
    \For{$u \in N_k(v)$}
        \State Compute $p_{u|v}$ using Eq. \ref{eq:pijcond}.
    \EndFor
 \EndFor \label{line:knnprobend}
 \State Compute $p_{ij}$ using Eq. \ref{eq:pij}.
 \For{iterations in 1 to 500}
    \State // C1: KL divergence computation
    \State Construct quadtree of drawing $D$.
    \State Compute $C_{KL}$ using quadtree approximation.
    \State Compute $C_{CMP}$.
    \State // C2: entropy computation
    \State Compute quadtree of vertices in $D$. \label{line:knnentstart}
    \State Compute $C_{ENT}$ using quadtree approximation. \label{line:knnentend}
    \State Move each vertex $v$ in the direction of $\frac{\delta}{\delta v}(C)$. \label{line:movegradient}
 \EndFor
 \State \textbf{return} $D$
 \end{algorithmic}
 \label{alg:bhtsnet}
\end{algorithm}

For C0, \texttt{tsNET} requires the computation of shortest paths between every pair of vertices. \texttt{BH-SNE} reduces the runtime of C0 by computing the $k$NN of each object using vantage-point trees; however, this approach cannot be directly applied as-is for graphs.

Therefore, we design \texttt{BH-tsNET} by adding a new \emph{partial BFS (Breadth-First Search)} starting from each vertex $v$ in a graph $G = (V, E)$, stopping once $k$ other vertices have been visited; when multiple vertices are of the same shortest path distance from $v$, ties are broken randomly. 
Then $p_{ij}$ can be computed in the same way as \texttt{BH-SNE}, using the $k$ vertices selected in the partial BFS as the $k$-nearest neighbors of the vertex $v$. 
We set $k = 3u$ for perplexity $u$, same as \texttt{BH-SNE}~\cite{van2014accelerating}. 
See lines \ref{line:knnprobstart}-\ref{line:knnprobend} in Algorithm \ref{alg:bhtsnet} for details.

The runtime of C1 is reduced to $O(n \log n)$ time by using the quadtree approach of \texttt{BH-SNE}: in each gradient descent iteration, a quadtree is computed on the drawing $D$, and the repulsion term of the KL gradient is approximated using the quadtree.

For C2, we newly reduce the runtime of entropy computation, by similarly utilizing quadtrees: for each vertex $v_i$ with position $X_i$ in drawing $D$, perform a depth-first search of the quadtree computed on $D$. If a quadtree node $N_j$ can be used as a summary, compute the entropy gradient between $v_i$ and $N_j$ as:

\begin{equation} 
    \frac{\lambda_r}{n^2} \left ( \frac{|N_j|(X_Nj - X_i)}{\epsilon - ||X_Nj - X_i||^2} \right )
    \label{eq:qtent}
\end{equation}

where $X_Nj$ is the coordinates of the center of mass of $N_j$ and $|N_j|$ is the number of constituent points of $N_j$. With all components computed, each vertex $v$ is moved in the direction of the partial derivative with regards to $v$ of the cost function $C$, see line \ref{line:movegradient}.

The following theorem proves the time and space complexity of \texttt{BH-tsNET}.

\begin{theorem} \label{theorem:bhtsnet}
    \texttt{BH-tsNET} runs in $O(n \log n)$ time, with $O(n \log n + m)$ space complexity.
\end{theorem}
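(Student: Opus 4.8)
The plan is to bound the running time by decomposing Algorithm~\ref{alg:bhtsnet} into its one-time preprocessing (the Pivot MDS initialization and the C0 probability computation, lines~2--12) and the gradient-descent loop (lines~13--22), and to show that each piece fits within the $O(n \log n)$ budget. Since the loop executes a constant number of iterations (500), it suffices to show that a single iteration --- comprising C1, the compression term, and C2 --- costs $O(n \log n)$, and that preprocessing is at most $O(n \log n)$; the total is then a sum of a constant number of $O(n \log n)$ terms.

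First I would handle C0. Each partial BFS from a source $v$ halts as soon as $k = 3u$ vertices have been visited. Because the perplexity $u$ is a fixed input parameter, $k$ is a constant, so each BFS discovers only $O(1)$ vertices and, in a sparse graph, examines only $O(1)$ incident edges before terminating; the conditional probabilities $p_{u|v}$ of Eq.~\ref{eq:pijcond} are then evaluated over the $O(1)$ pairs in $N_k(v)$. Summing over all $n$ sources gives $O(n)$ time and $O(kn) = O(n)$ space for the sparse table of $p_{ij}$ (Eq.~\ref{eq:pij}). For the initialization I would invoke the standard Pivot MDS bound: with a constant number of pivots, the pivot-to-all shortest paths cost $O(n + m)$ and the subsequent low-rank embedding is $O(n)$, which is absorbed into the stated budget in the sparse-graph regime in which the $O(n \log n)$ time bound is meaningful.

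For a single loop iteration, C1 is exactly the Barnes-Hut quadtree approximation of the attraction/repulsion split of Eq.~\ref{eq:klgrad}, already established as $O(n \log n)$ for \texttt{BH-SNE}~\cite{van2014accelerating}: building the quadtree is $O(n)$ and each of the $n$ depth-first summary traversals visits $O(\log n)$ cells. The new C2 step reuses the same quadtree and the same Barnes-Hut traversal discipline, approximating the entropy gradient of Eq.~\ref{eq:entgrad} by the per-cell summary of Eq.~\ref{eq:qtent}; by the identical argument --- $O(n)$ construction plus $O(\log n)$ summary cells per vertex --- C2 is also $O(n \log n)$, while the compression term is $O(n)$. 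Hence one iteration is $O(n \log n)$ and, over the constant iteration count, so is the whole loop. For space, the adjacency structure of $G$ is $O(n + m)$, the sparse probability table is $O(n)$, and the quadtree over the $n$ embedded points contributes $O(n \log n)$, giving $O(n \log n + m)$ overall.

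I expect the main obstacle to lie in the two new components rather than in C1, which is inherited. For C0 the delicate point is justifying that halting the BFS at a constant number of visited vertices truly yields constant work per source: one must argue that the edge examinations needed to certify the $k$ closest vertices are bounded (for instance under a bounded or amortized degree assumption), so that the naive $O(k\,\Delta)$ per-source cost collapses to $O(1)$ and the total stays $O(n)$ rather than scaling with the edge count. For C2 the key step is to show that applying the Barnes-Hut opening criterion to the entropy kernel $1/||X_k - X_i||^2$ yields the same $O(\log n)$ bound on visited summary cells as the $q_{ij}$ kernel does in the repulsion term; once the traversal depth per vertex is pinned to $O(\log n)$, the remaining accounting is routine.
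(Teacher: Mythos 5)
Your proposal is correct and follows essentially the same route as the paper's own proof: a component-wise accounting in which C0 is $O(n)$ because $k = 3u$ is a constant, C1 and C2 are each $O(n \log n)$ via the quadtree construction and Barnes--Hut-style traversal, and the space bound combines the $O(n+m)$ graph, the $O(n)$ sparse $p_{ij}$ table, and the $O(n \log n)$ quadtree. You are in fact somewhat more explicit than the paper, which does not discuss the Pivot MDS initialization cost, the constant iteration count, or the delicate points you flag (per-source edge examinations in the partial BFS and the traversal-depth bound for the entropy kernel).
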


\begin{proof}
The runtime of BH-tsNET can be analyzed using the runtime of each step C0-C2. In C0, partial BFS runs once per vertex, where the traversal stops after $k$ vertices are visited; as $k$ is constant with respect to $n$, C0 runs in $O(n)$ time in total. 
Both C1 and C2 run in $O(n \log n)$ time, as the quadtree can be computed in $O(n \log n)$ time. Therefore, overall, \texttt{BH-tsNET} runs in $O(n \log n)$ time. 

On top of the input graph requiring $O(n + m)$ space (for $n$ vertices and $m$ edges), $O(n)$ space is needed to store the $p_{ij}$ values (constant $k$ values per vertex), and $O(n \log n)$ space to store the quadtree structure, for a total of $O(n \log n + m)$ space.
\end{proof}

\subsection{FIt-tsNET}

In \texttt{FIt-tsNET}, we first use our new partial BFS approach to compute the $k$NN of vertices in the graph for C0, reducing the runtime from $O(n^3)$ to $O(n)$. 
We then use the FFT-accelerated interpolation of \texttt{FIt-SNE} for C1 to reduce the runtime from $O(n^2)$ to $O(n)$. 
Finally, we use our new quadtree-based entropy for C2 to reduce the runtime from $O(n^2)$ to $O(n \log n)$.
See the details in Algorithm 
\ref{alg:fittsnet}. 

\begin{algorithm}
 \caption{\textbf{FIt-tsNET}}
 \begin{algorithmic}[1]
 \State \textbf{Input:} Graph $G = (V,E)$, perplexity $u$, \# of intervals $I$, \# of interpolation points $P$
 \State Compute initial layout $D$ of $G$ using Pivot MDS.
 \State  $k = 3u$
 \For{$v \in V$}
 \State  // C0: $p_{ij}$ computation
    \State Run BFS starting at $v$ until $k$ vertices are visited.
    \State $N_k(v)$: set of $k$NN of $v$ computed using BFS.
    \For{$u \in N_k(v)$}
        \State Compute $p_{u|v}$ using Eq. \ref{eq:pijcond}.
    \EndFor
 \EndFor
 \State Compute $p_{ij}$ using Eq. \ref{eq:pij}.
 \For{iterations in 1 to 500}
    \State // C1: KL divergence computation
    \For{interval $i$ in 1 to $I$} \label{line:interpstart}
        \State Compute $P$ interpolation points for interval $i$.
    \EndFor
    \State Compute $C_{KL}$ sums using FFT-accelerated interpolation.\label{line:interpend}
    \State Compute $C_{CMP}$.
    \State // C2: entropy computation
    \State Compute quadtree of vertices in $D$.
    \State Compute $C_{ENT}$ using quadtree approximation.
    \State Move each vertex $v$ in the direction of $\frac{\delta}{\delta v}(C)$.
 \EndFor
 \State \textbf{return} $D$
 \end{algorithmic}
 \label{alg:fittsnet}
\end{algorithm}

For C0, \texttt{FIt-SNE} reduces the computation to $O(n \log n)$ time using approximate $k$NN; however, this also cannot be directly applied as-is for graphs.
Therefore, \texttt{FIt-tsNET} uses our new partial all-source BFS to compute the $k$NN of each vertex, similar to \texttt{BH-tsNET}.

Unlike \texttt{BH-tsNET}, \texttt{FIt-tsNET} further improves the runtime of C1 to $O(n)$ using the FFT-accelerated interpolation of \texttt{FIt-SNE} (see lines \ref{line:interpstart}-\ref{line:interpend}).
The number of intervals $I$ and the number of interpolation points $P$ are constants, independent of the number of vertices $n$~\cite{linderman2019fast}, keeping the runtime of C1 as $O(n)$. 

For C2, we reduce the runtime to $O(n \log n)$ using quadtrees (see Eq. \ref{eq:qtent}), similar to \texttt{BH-tsNET}.
The following theorem proves the time and space complexity of \texttt{FIt-tsNET}.

\begin{theorem} \label{theorem:fittsnet}
    \texttt{FIt-tsNET} runs in $O(n \log n)$ time with $O(n \log n + m)$ space complexity.
\end{theorem}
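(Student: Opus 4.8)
The plan is to mirror the component-wise argument used in the proof of Theorem~\ref{theorem:bhtsnet}, bounding C0, C1, and C2 separately and then combining them, followed by a separate accounting of space. Since \texttt{FIt-tsNET} differs from \texttt{BH-tsNET} only in the method used for C1 (FFT-accelerated interpolation in place of quadtree approximation), the analyses of C0 and C2 should carry over essentially unchanged, while C1 improves from $O(n \log n)$ to $O(n)$.

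First I would dispatch C0 exactly as in Theorem~\ref{theorem:bhtsnet}: one partial BFS per vertex, each terminating after $k = 3u$ vertices are visited. As $k$ is constant with respect to $n$, each traversal touches only a constant number of vertices and edges, so C0 runs in $O(n)$ time. Next, for C1, I would invoke the fact stated earlier in the excerpt that the number of intervals $I$ and interpolation points $P$ are constants independent of $n$; hence the FFT-accelerated interpolation computes the kernel sums of the KL gradient in $O(n)$ time per iteration, and over the constant number of iterations C1 contributes $O(n)$ in total. For C2, the quadtree-based entropy is identical to \texttt{BH-tsNET}: building a quadtree on $D$ costs $O(n \log n)$ per iteration, and approximating the entropy gradient via Eq.~\ref{eq:qtent} stays within this bound. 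Combining the three, the overall runtime is dominated by C2 and is therefore $O(n \log n)$.

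For space, I would sum the contributions: the input graph occupies $O(n + m)$, the stored $p_{ij}$ values occupy $O(n)$ (a constant $k$ entries per vertex), and the quadtree occupies $O(n \log n)$; the interpolation adds only $O(n)$ since $I$ and $P$ are constant. This gives $O(n \log n + m)$ total. The proof is largely a routine complexity accounting, so there is no deep obstacle; the one subtlety worth emphasizing is that reducing C1 to $O(n)$ does \emph{not} improve the asymptotic runtime, because C2 now becomes the bottleneck at $O(n \log n)$. Verifying that the quadtree-based entropy is genuinely $O(n \log n)$ and that no other step exceeds it is thus the real content of the argument, and this observation is precisely what motivates the subsequent \texttt{L-tsNET}, where C2 is further reduced to $O(n)$.
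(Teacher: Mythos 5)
Your proposal is correct and follows essentially the same component-wise accounting as the paper's proof: $O(n)$ for C0 via partial BFS, $O(n)$ for C1 via FFT-accelerated interpolation with constant $I$ and $P$, and $O(n\log n)$ for the quadtree-based C2, which dominates; the space analysis likewise matches (the paper charges the interpolation points $O(1)$ rather than your $O(n)$, but this does not affect the $O(n\log n + m)$ total). No gaps.
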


\begin{proof}
Both C0 and C1 take $O(n)$ time due to the partial BFS for C0 and the FFT-accelerated interpolation for C1. 
C2 runs in $O(n \log n)$ time due to the usage of quadtrees. Therefore, the overall runtime of \texttt{FIt-tsNET} is $O(n \log n)$. 

Similar to \texttt{BH-tsNET}, \texttt{FIt-tsNET} requires $O(n)$ space to store the $p_{ij}$ values and $O(n \log n)$ space for the quadtree, while the interpolation points only require $O(1)$ space, 
therefore in total $O(n \log n + m)$ space.
\end{proof}


Note that the overall time complexity of \texttt{FIt-tsNET} and \texttt{BH-tsNET} is the same, $O(n \log n)$.
However, we expect \texttt{FIt-tsNET} to run faster than \texttt{BH-tsNET} in practice, as C1 takes $O(n)$ time by \texttt{FIt-tsNET}, faster than $O(n \log n)$ time by \texttt{BH-tsNET}.

\subsection{L-tsNET (Linear-tsNET)}

In \texttt{L-tsNET}, we first use our new partial BFS approach to compute the $k$NN of vertices in the graph for C0, reducing the runtime from $O(n^3)$ to $O(n)$. 
We then use the FFT-accelerated interpolation of \texttt{FIt-SNE} for C1 to reduce the runtime from $O(n^2)$ to $O(n)$. 
Finally, we use our new interpolation-based entropy for C2 to reduce the runtime from $O(n^2)$ to $O(n)$.
See the details in Algorithm  
\ref{alg:ltsnet}.

\begin{algorithm}
 \caption{\textbf{L-tsNET}}
 \begin{algorithmic}[1]
 \State \textbf{Input:} Graph $G = (V,E)$, perplexity $u$, \# of intervals $I$, \# of interpolation points $P$
 \State Compute initial layout $D$ of $G$ using Pivot MDS.
 \State $k = 3u$ 
 \For{$v \in V$}
 \State // C0: $p_{ij}$ computation
    \State Run BFS starting at $v$ until $k$ vertices are visited.
    \State $N_k(v)$: set of $k$NN of $v$ computed using BFS.
    \For{$u \in N_k(v)$}
        \State Compute $p_{u|v}$ using Eq. \ref{eq:pijcond}.
    \EndFor
 \EndFor 
 \State Compute $p_{ij}$ using Eq. \ref{eq:pij}.
 \For{iterations in 1 to 500}
    \State // C1: KL divergence computation
    \For{interval $i$ in 1 to $I$}
        \State Compute $P$ interpolation points for interval $i$.
    \EndFor
    \State Compute the sums for $C_{KL}$ using FFT-accelerated interpolation.
    \State Compute $C_{CMP}$.
    \State \// C2: entropy computation
    \State Compute sums $h_1,h_2,h_3$ using FFT-accelerated interpolation.
    \State Compute gradient of $C_{ENT}$ with Eq. \ref{eq:ltsnet_entgrad}.
    \State Move each vertex $v$ in the direction of $\frac{\delta}{\delta v}(C)$.
 \EndFor
 \State \textbf{return} $D$
 \end{algorithmic}
  \label{alg:ltsnet}
\end{algorithm}

For C0 and C1, the runtime is reduced to $O(n)$ using partial all-source BFS and FFT-accelerated interpolation, respectively, the same as \texttt{FIt-SNE}. 
Unlike \texttt{FIt-tsNET}, C2 is further reduced to $O(n)$ time by extending the FFT-accelerated interpolation to entropy computation.

Specifically, to apply the interpolation method used in \texttt{FIt-SNE} to entropy computation, we reformulate the entropy partial derivative (Eq. \ref{eq:entgrad}) to incorporate a nonzero regularization constant $\epsilon$:

\[
 \frac{\lambda_r}{n^2} \sum_{i \neq k} \frac{X_k - X_i}{\epsilon - ||X_k - X_i||^2}
\]

Note that by reformulating the summation as $-\frac{\lambda_r}{n^2} \sum_{i \neq k} \frac{X_k - X_i}{\epsilon + ||X_k + X_i||^2}$, the denominator is now in a similar form to one of the kernels used in \texttt{FIt-SNE}, $\frac{1}{1 + ||X_k - X_i||^2}$. 
We define the entropy kernel as $K_e = \frac{1}{\epsilon + ||X_k - X_i||^2}$, and define $h_1, h_2, h_3$ as sums of the form:

\[
h_1 = \sum_{i \neq k} \frac{1}{\epsilon + ||X_k - X_i||^2}, \\
h_2 = \sum_{i \neq k} \frac{x_k}{\epsilon + ||X_k - X_i||^2},
\]

\[
h_3 = \sum_{i \neq k} \frac{y_k}{\epsilon + ||X_k - X_i||^2}
\]

\noindent where $x_k$ and $y_k$ are the $x-$ and $y-$coordinate respectively of $v_k$ in the drawing.

The $x$-component of the partial derivative of the entropy with respect to a vertex $v_i$ can then be restated as:

\begin{equation}
 -\frac{\lambda_r}{n^2} \sum_{i \neq k} \frac{x_k - x_i}{\epsilon + ||X_k - X_i||^2} = -\frac{\lambda_r}{n^2} \left ( h_2 - x_i h_1 \right )
 \label{eq:ltsnet_entgrad}
\end{equation}

The $y$-component can then be computed in a similar way, substituting $h_3$ for $h_2$. 
In our algorithms, we set $\epsilon = \frac{1}{20}$, the same as \texttt{tsNET}~\cite{kruiger2017graph}.
In this form, the FFT-accelerated interpolation can be used to compute the sums $h_1, h_2, h_3$ to perform the entropy computation in C2 in $O(n)$ time, similar to C1.

The following theorem proves the time and space complexity of \texttt{L-tsNET}, the fastest and the most space-efficient algorithm.

\begin{theorem} \label{theorem:ltsnet}
    \texttt{L-tsNET} runs in $O(n)$ time with $O(n + m)$ space complexity.
\end{theorem}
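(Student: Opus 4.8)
The plan is to follow the same per-component decomposition used in the proofs of Theorems \ref{theorem:bhtsnet} and \ref{theorem:fittsnet}, establishing an $O(n)$ runtime for each of C0, C1, and C2 separately and then summing. First I would dispatch C0 and C1, which are carried over unchanged from \texttt{FIt-tsNET}: the partial BFS from each vertex terminates after visiting $k = 3u$ vertices, and since the perplexity $u$ (hence $k$) is constant in $n$, the total cost over all $n$ source vertices is $O(n)$; C1 runs in $O(n)$ by the FFT-accelerated interpolation of \texttt{FIt-SNE}, since the number of intervals $I$ and interpolant points $P$ are constants independent of $n$.

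The crux of the argument is establishing that C2 also runs in $O(n)$, which is the one genuinely new component of \texttt{L-tsNET}. Here I would use the reformulated entropy gradient (Eq. \ref{eq:ltsnet_entgrad}), which decomposes the $x$-component of the derivative into $h_2 - x_i h_1$, where $h_1, h_2, h_3$ are sums taken against the entropy kernel $K_e = \frac{1}{\epsilon + ||X_k - X_i||^2}$. The key observation is that $K_e$ has the same algebraic form as the \texttt{FIt-SNE} kernel $\frac{1}{1 + ||X_i - X_j||^2}$, differing only in the constant $\epsilon$ in place of $1$, so the interpolation-on-a-grid scheme applies verbatim. I expect the main obstacle to be arguing rigorously that this structural match is sufficient: that computing $h_1, h_2, h_3$ reduces to a constant number of convolutions of interpolant values against $K_e$ evaluated at grid points, each accelerated by FFT in $O(n)$ time, so that C2 inherits the $O(n)$ bound of C1. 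Combining the three $O(n)$ bounds then yields the overall $O(n)$ runtime.

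For the space complexity, I would emphasize the improvement over \texttt{FIt-tsNET}: because C2 no longer builds a quadtree, the $O(n \log n)$ quadtree storage is eliminated entirely. Storing the $p_{ij}$ values requires $O(n)$ space (a constant number of entries per vertex), both the C1 and C2 interpolation schemes use only $O(1)$ working space since $I$ and $P$ are constants, and the input graph itself requires $O(n + m)$. Summing these contributions gives the claimed $O(n + m)$ space bound.
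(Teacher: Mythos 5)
Your proposal is correct and follows essentially the same route as the paper's proof: a per-component decomposition with C0 bounded by the constant-$k$ partial BFS, C1 by the \texttt{FIt-SNE} interpolation, and C2 by the reformulated entropy gradient (Eq.~\ref{eq:ltsnet_entgrad}) whose kernel matches the \texttt{FIt-SNE} form, with the space bound coming from dropping the quadtree in favor of $O(1)$ interpolation storage. Your write-up is in fact somewhat more explicit than the paper's own two-line argument about why the kernel match suffices for C2.
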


\begin{proof}
Both steps for C0 and C1 run in $O(n)$ time using the partial BFS and FFT-accelerated interpolation, respectively. 
C2 now runs in $O(n)$ time by using the FFT-acceleration interpolation for the entropy computation. Overall, \texttt{L-tsNET} takes $O(n)$ time. 

For space complexity, both C1 and C2 now only require the storage of a constant number of interpolation points, thus reducing the space complexity to $O(n + m)$ in total, compared to the $O(n \log n + m)$ space complexity of \texttt{BH-tsNET} and \texttt{FIt-tsNET}.
\end{proof}

\section{Experiment 1: \texttt{BH-tsNET}, \texttt{FIt-tsNET}, \texttt{L-tsNET} vs. \texttt{tsNET}}
\label{sec:eval}

We now present experiments comparing our algorithms, \texttt{BH-tsNET}, \texttt{FIt-tsNET} and \texttt{L-tsNET}, to \texttt{tsNET}. 
Since the main scientific contribution of our new algorithms focuses on improving the runtime efficiency of \texttt{tsNET}, we mainly focus on the runtime and quality comparison with \texttt{tsNET}.

Note that \texttt{tsNET} was shown to outperform well-known graph layouts (\texttt{SFDP}~\cite{hu2005efficient}, \texttt{LinLog}~\cite{noack2003energy}, \texttt{GRIP}~\cite{gajer2002grip},  \texttt{NEATO}~\cite{north2004drawing}) on neighborhood preservation~\cite{kruiger2017graph}. Therefore, we do not consider these layouts for comparison.
Moreover, as our main aim is to reduce the \emph{theoretical time complexity} of tsNET, we do not consider parallel \texttt{t-SNE} methods using GPU (\texttt{t-SNE-CUDA}~\cite{chan2018tsne},  \texttt{GPGPU-SNE}~\cite{pezzotti2020gpgpu}).

\subsection{Implementation and Experiment Design}

\noindent{\bf Implementation.}
We implement \texttt{BH-tsNET}, \texttt{FIt-tsNET}, and \texttt{L-tsNET} in Python and C++.
Specifically, the computation of high-dimensional probabilities is implemented in Python based on the \texttt{t-SNE} implementation from the scikit-learn library~\cite{pedregosa2011scikit}, while all other functions are implemented in C++ based on the implementations of \texttt{BH-SNE}~\cite{van2014accelerating} and \texttt{FIt-SNE}~\cite{linderman2019fast}. 
In order to have a fair runtime comparison, we also implement \texttt{tsNET} in Python and C++ based on the same libraries.

\noindent{\bf Hypotheses.}
We compare our \texttt{BH-tsNET}, \texttt{FIt-tsNET}, and \texttt{L-tsNET} algorithms to \texttt{tsNET} on runtime, quality metrics (neighborhood preservation), and visual comparison. 

Based on the time complexity analyses in Theorems \ref{theorem:bhtsnet}, \ref{theorem:fittsnet}, and  \ref{theorem:ltsnet}, we expect that (1) all our algorithms run significantly faster than \texttt{tsNET}, (2) \texttt{L-tsNET} runs the fastest, and (3) \texttt{FIt-tsNET} runs faster than \texttt{BH-tsNET} due to the runtime reduction of C1 to $O(n)$.


\begin{hyp} \label{hyp:runtime_base}
\texttt{BH-tsNET}, \texttt{FIt-tsNET}, and \texttt{L-tsNET} run significantly faster than \texttt{tsNET}.
\end{hyp}

\begin{hyp} \label{hyp:runtime_comp}
\texttt{FIt-tsNET} runs faster than \texttt{BH-tsNET}.
\end{hyp}

\begin{hyp} \label{hyp:runtime_lts}
\texttt{L-tsNET} runs significantly faster than  \texttt{FIt-tsNET}.
\end{hyp}

\begin{hyp} \label{hyp:quality}
\texttt{BH-tsNET}, \texttt{FIt-tsNET}, and \texttt{L-tsNET} compute similar quality drawings to \texttt{tsNET}.
\end{hyp}

\noindent{\bf Benchmark Data Sets.}
We use well-known benchmark data sets used in \texttt{tsNET}~\cite{kruiger2017graph} and standard large and complex graph test suites used by the latest fast graph drawing algorithms~\cite{meidiana2020sublinear,meidiana2024sublinearforce,meidiana2021stress}: (1) real-world scale-free graphs~\cite{snapnets}, globally sparse but locally dense graphs with high clustering and small diameter; (2) $GION$ data set of large RNA networks~\cite{marner2014gion}, globally sparse but locally dense graphs with long diameters; and (3) mesh graphs~\cite{davis2011university}. 
Table \ref{table:data} in Supplementary Materials shows the details of data sets, including the density and perplexity used.

\noindent{\bf Experiment Details.}
To keep our implementation and evaluation consistent with the \texttt{tsNET} experiment~\cite{kruiger2017graph}, we set the perplexity $u$ as 40 by default, as used in~\cite{kruiger2017graph}.
When a perplexity of 40 results in a ``perplexity too low'' error on the implementation from~\cite{kruiger2017graph}, we set the perplexity as the lowest multiple of 100 that does not produce the error.

For each graph, we run \texttt{tsNET},  \texttt{tsNET}, \texttt{BH-tsNET}, and \texttt{FIt-tsNET} five times and take the average for runtime, to ensure consistency.
Moreover, we use Pivot MDS initialization, as was shown to obtain better quality than random initialization for \texttt{tsNET}~\cite{kruiger2017graph}. 
For each algorithm running on one graph, the quality metrics are the same between different runs as \texttt{tsNET} is deterministic given the same initial layout.

\begin{figure}[h]
    \centering
    \subfloat[Runtime. avg.]{
    \includegraphics[width=0.4\columnwidth]{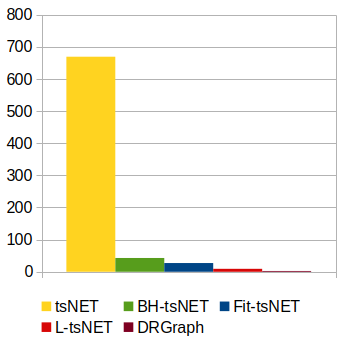}
    }
    \subfloat[Neigh. pres. avg.]{
    \includegraphics[width=0.4\columnwidth]{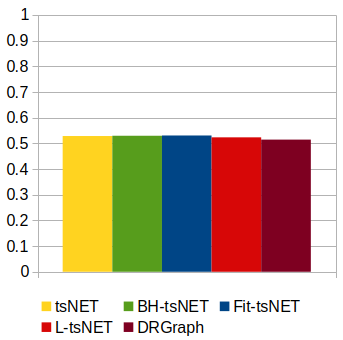}
    }
    \qquad
    \subfloat[Stress avg.]{
    \includegraphics[width=0.4\columnwidth]{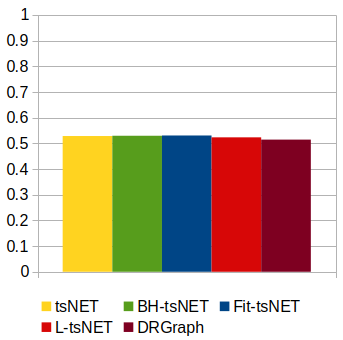}
    }
    \subfloat[Shape-based avg.]{
    \includegraphics[width=0.4\columnwidth]{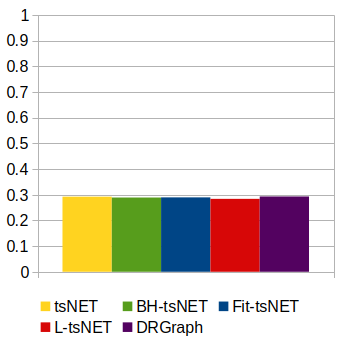}
    }
    \qquad
    \subfloat[Edge cross. avg. (all)]{\quad 
    \includegraphics[width=0.4\columnwidth]{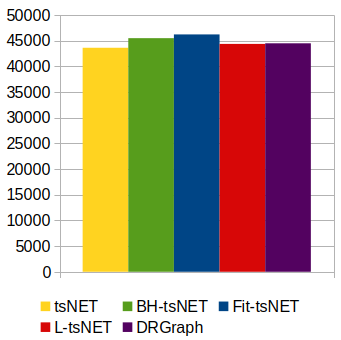} \quad}
    \subfloat[Edge cross. avg. (mesh)]{
    \includegraphics[width=0.4\columnwidth]{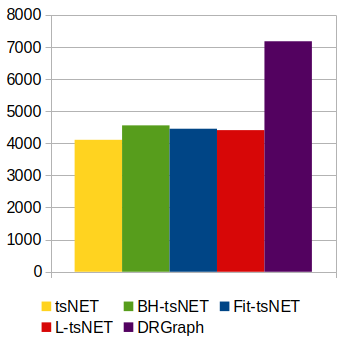} \quad}
    \caption{ Average runtime and quality metrics comparison of \texttt{tsNET}, \texttt{BH-tsNET}, \texttt{FIt-tsNET}, \texttt{L-tsNET}, and \texttt{DRGraph}. 
    \texttt{BH-tsNET}, \texttt{FIt-tsNET}, and \texttt{L-tsNET} all run significantly faster than \texttt{tsNET}, with very similar quality metrics. 
    On mesh graphs, \texttt{L-tsNET}  obtain better edge crossing than \texttt{DRGraph}.}
    \label{fig:metrics_drg_avg_all}
\end{figure}

\subsection{Runtime Comparison}

Figure \ref{fig:metrics_drg_avg_all}(a) shows the average runtime comparison between \texttt{BH-tsNET}, \texttt{FIt-tsNET}, and \texttt{L-tsNET}, confirming Hypothesis \ref {hyp:runtime_base}.
On average, \texttt{BH-tsNET}, \texttt{FIt-tsNET}, and \texttt{L-tsNET} run 93.5\%, 96\%, and 98.6\% faster \texttt{tsNET}, respectively, see Figure \ref{fig:tsnet_runtime}  for the runtime of the algorithms with all graphs in seconds. 
For larger graphs, the quadratic growth in runtime by \texttt{tsNET} dwarfs that of all of \texttt{BH-tsNET}, \texttt{FIt-tsNET}, and \texttt{L-tsNET}.

Figure \ref{fig:tsnet_runtime_nobase}  shows a detailed runtime comparison only between our three algorithms. 
On average, \texttt{FIt-tsNET} runs 36.7\% faster than \texttt{BH-tsNET}, confirming Hypothesis \ref{hyp:runtime_comp}. 
Moreover, \texttt{L-tsNET} runs 78.6\% and 66.3\% faster than \texttt{BH-tsNET} and \texttt{FIt-tsNET} respectively on average, confirming Hypothesis \ref{hyp:runtime_lts}.
Therefore, these results support Hypotheses \ref{hyp:runtime_base}, \ref{hyp:runtime_comp}, and \ref{hyp:runtime_lts}. 
Furthermore, Theorems \ref{theorem:bhtsnet}, \ref{theorem:fittsnet}, and  \ref{theorem:ltsnet} prove that such runtime comparison results seen in Figure \ref{fig:tsnet_runtime} also hold for larger graphs.

For smaller graphs with $|V|< 1700$, \texttt{BH-tsNET}, \texttt{FIt-tsNET}, and \texttt{L-tsNET} take almost the same runtime as \texttt{tsNET}, due to the overhead of computing quadtrees and setting up the interpolation.
However, for large graphs
\texttt{BH-tsNET}, \texttt{FIt-tsNET}, and \texttt{L-tsNET} run significantly faster than \texttt{tsNET} as the sizes of the graphs grow since such overheads become proportionally smaller compared to the significant speed-up achieved by runtime reduction of the gradient descent iteration.

\emph{In summary, our \texttt{BH-tsNET}, \texttt{FIt-tsNET}, and \texttt{L-tsNET} algorithms run significantly faster than \texttt{tsNET}, on average 93.5\%, 96\%, and 98.6\% faster respectively, supporting Hypotheses \ref{hyp:runtime_base} to \ref{hyp:runtime_lts}.}

\begin{figure*}[h!]
    \centering
    \includegraphics[width=0.93\textwidth]{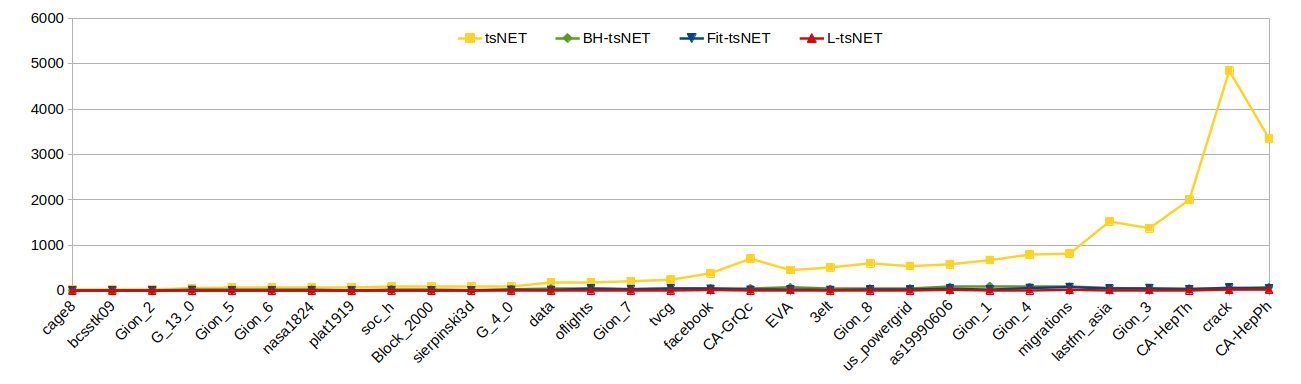}
    \caption{Runtime (in seconds): On average, all of \texttt{BH-tsNET}, \texttt{FIt-tsNET}, and \texttt{L-tsNET} run 93\%, 96\%, and 98.6\% faster than \texttt{tsNET} respectively.}
    \label{fig:tsnet_runtime}
  \vspace{-3mm}
\end{figure*}

\begin{figure*}[h!]
    \centering
    \includegraphics[width=0.93\textwidth]{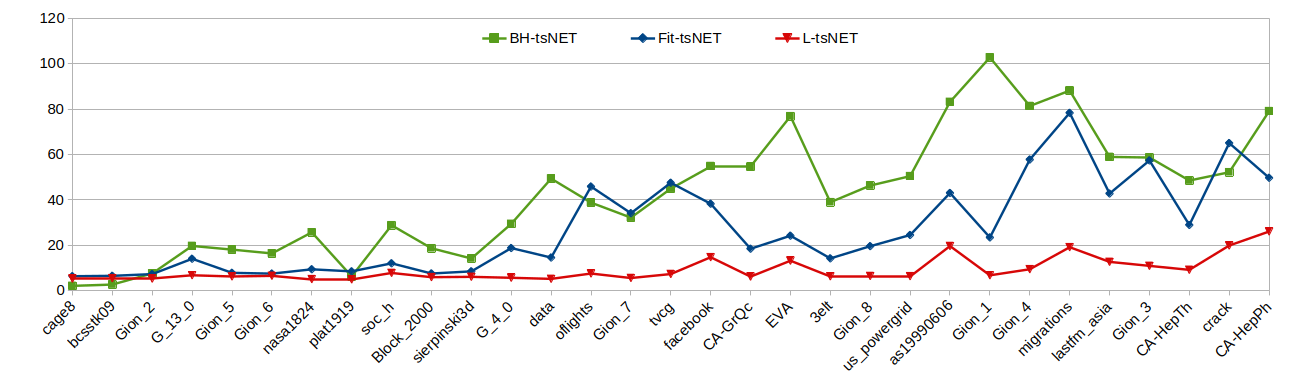}
    \caption{Runtime comparison (in seconds) with \texttt{tsNET} redacted. \texttt{L-tsNET} run on average 78.6\% and 66.3\% faster than \texttt{BH-tsNET} and \texttt{FIt-tsNET} respectively.}
    \label{fig:tsnet_runtime_nobase}
    \vspace{-3mm}
\end{figure*}

\subsection{Quality Metrics Comparison}

For quality metrics, we use the same metrics used in \texttt{tsNET} evaluation~\cite{kruiger2017graph},  {\em neighborhood preservation}~\cite{gansner2012maxent,kruiger2017graph} and {\em stress}~\cite{battista1998graph}.
In addition, we also most popular {\em readability} metrics for graph drawing, {\em edge crossing}~\cite{battista1998graph}, as well as the {\em faithfulness } metrics designed for large graphs, {\em shape-based metrics}~\cite{eades2017shape}.

\subsubsection{Neighborhood preservation (NP)} 

NP measures how similar the neighborhood $N_G(v, r)$, i.e., the set of vertices with a shortest path distance of at most $r$ from a vertex $v \in V$ in a graph $G = (V, E)$ (we set $r=2$, the same as~\cite{kruiger2017graph}) 
is represented by the \emph{geometric} neighborhood $N_D(v, r)$ of $v$ in a drawing $D$ of $G$ 
(i.e., the $|N_G(v, r)|$-closest neighbors of $v$ in $D$)~\cite{martins2015explaining}. 
Specifically, NP is computed as: 
\[
    \sum_{v \in V} \frac{|N_G(v,r) \cap N_D(v,r)|}{|N_G(v,r) \cup N_D(v,r)|}.
\]

Surprisingly, despite the much faster runtime, the NP metrics of the drawings computed by \texttt{BH-tsNET}, \texttt{FIt-tsNET}, and \texttt{L-tsNET} are almost the same as that of \texttt{tsNET} for all graphs, confirming Hypothesis \ref{hyp:quality}. 
See Figure \ref{fig:metrics_drg_avg_all} (b) for an average and Figure \ref{fig:tsnet_np} for details per data. NP has a value between 0 and 1, where a higher value is better.

Furthermore,  \texttt{L-tsNET} produce drawings with better NP metrics than \texttt{tsNET} for dense graphs with long diameters, such as GION\_1 and GION\_3,
as seen in Figure \ref{fig:tsnet_np} in Supplementary Materials.

\subsubsection{Stress (ST)} 

ST measures how proportionally the shortest path distance $d(v_i,v_j)$ between vertices $v_i$ and $v_j$ in $G$ is represented as the geometric distances $||X_i - X_j||$ between $X_i$ and $X_j$, the positions of the vertices in the drawing $D$ of $G$~\cite{battista1998graph,kamada1989algorithm,gansner2005graph}.
Specifically, we use the aggregated stress: 
\[
 \frac{1}{n(n-1)}\sum_{i, j \in |V|} \left ( \frac{d(v_i,v_j) - ||X_i - X_j||}{(d(v_i,v_j)} \right )^2   
\]

The ST metrics of the drawings computed by \texttt{BH-tsNET}, \texttt{FIt-tsNET}, and \texttt{L-tsNET} are also almost the same as the drawings computed by \texttt{tsNET}, confirming Hypothesis \ref{hyp:quality}.
See the average in Figure \ref{fig:metrics_drg_avg_all} (c) and Figure \ref{fig:tsnet_stress} in Supplementary Materials for the details per data. For ST, a lower value is better.

Surprisingly, for large dense graphs with large diameters such as GION graphs (GION\_8, GION\_1, GION\_4, and GION\_3), \texttt{L-tsNET} obtains drawings with significantly lower ST than \texttt{tsNET}, on average about 9\% lower.

\subsubsection{Shape-based Metrics (SB)}

SB~\cite{eades2017shape} are specifically designed to evaluate the quality of large graph drawings, measuring how faithfully the ``shape'' of the drawing, using the \emph{proximity graph}~\cite{toussaint1986computational} computed from a drawing $D$ of $G$, represents the ground truth structure of $G$. 
Specifically, we use the $dRNG$ (Degree-sensitive Relative Neighborhood Graph) proximity graph~\cite{hong2022dgg}, which was shown to measure SB more accurately than the  RNG used in ~\cite{eades2017shape}. 
SB has a value between 0 and 1, where a higher value is better.

Overall, the SB metrics of \texttt{BH-tsNET}, \texttt{FIt-tsNET}, and \texttt{L-tsNET} drawings are almost the same as \texttt{tsNET}, supporting Hypothesis \ref{hyp:quality}.
See Figure \ref{fig:metrics_drg_avg_all} (d) for the average SB metrics and Figure \ref{fig:tsnet_shp} in Supplementary Materials for the details per data.

More specifically, our algorithms tend to compute drawings with higher SB metrics than \texttt{tsNET} on dense real-world scale-free graphs (tvcg, CA-GrQc, CA-HepPh), and lower SB metrics on sparse real-world graphs (EVA and migrations).

\subsubsection{Edge Crossing (CR)}

CR is the most popular readability criterion to evaluate the quality of a graph drawing~\cite{battista1998graph,purchase1996validating}. 
The CR metrics of the drawings computed by \texttt{BH-tsNET}, \texttt{FIt-tsNET}, and \texttt{L-tsNET} are slightly higher than the drawings computed by \texttt{tsNET}, overall supporting Hypothesis \ref{hyp:quality}.
See Figure \ref{fig:metrics_drg_avg_all} (e) for the average CR and Figure \ref{fig:tsnet_crossing} in Supplementary Materials for the details per data.  A lower value for CR is better.

Surprisingly, \texttt{L-tsNET} computes drawings with almost the same CR metrics as drawings computed by \texttt{tsNET}, with 1.7\% difference on average, outperforming \texttt{BH-tsNET} and \texttt{FIt-tsNET} with  4.3\% and 6\% higher CR metrics respectively, still much lower than the significant runtime improvement of over 93\%.

\begin{table*}[h!]
    \centering
    \caption{Visual comparison: In general, \texttt{tsNET}, \texttt{BH-tsNET}, and \texttt{FIt-tsNET} produce drawings with visually the same quality.}
    \begin{tabular}{|c|c|c|c|}
    \hline
    tsNET & BH-tsNET & FIt-tsNET & L-tsNET \\ \hline
    \multicolumn{4}{|c|}{3elt} \\ \hline
    \includegraphics[width=0.33\columnwidth]{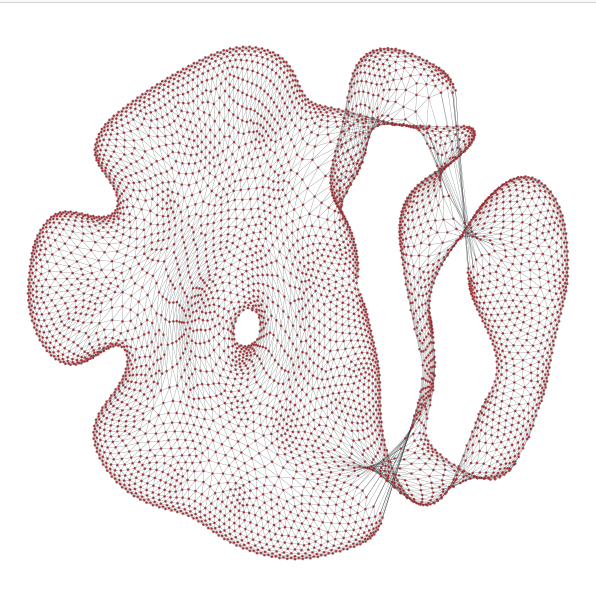} &
    \includegraphics[width=0.33\columnwidth]{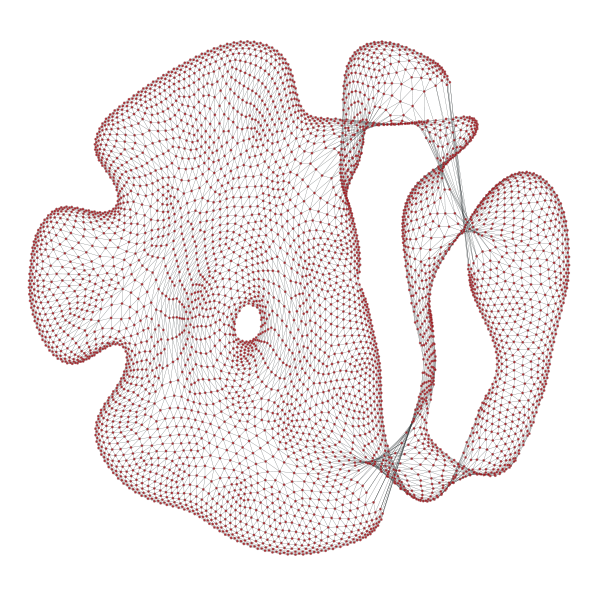} &
    \includegraphics[width=0.33\columnwidth]{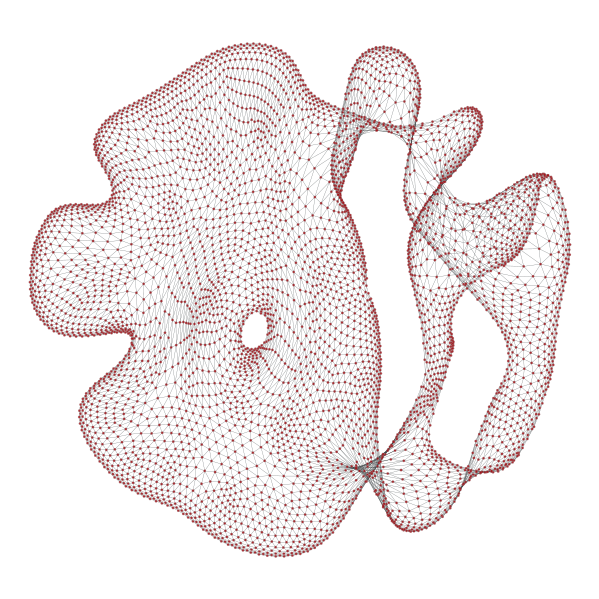} &
    \includegraphics[width=0.33\columnwidth]{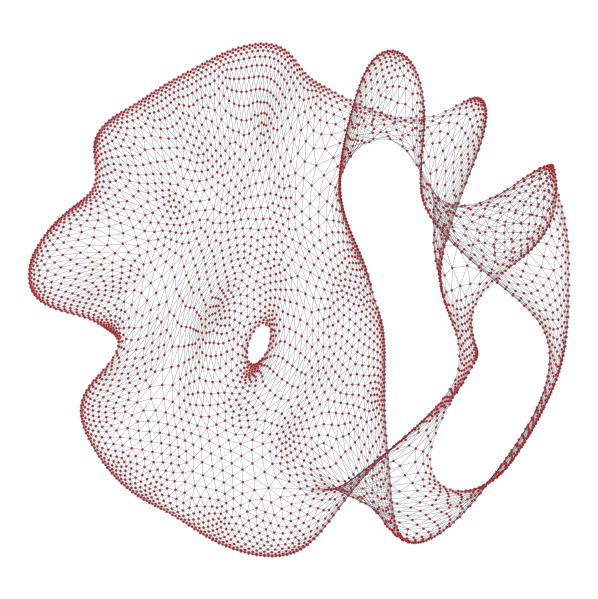} \\ \hline
    \multicolumn{4}{|c|}{oflights} \\ \hline
    \includegraphics[width=0.33\columnwidth]{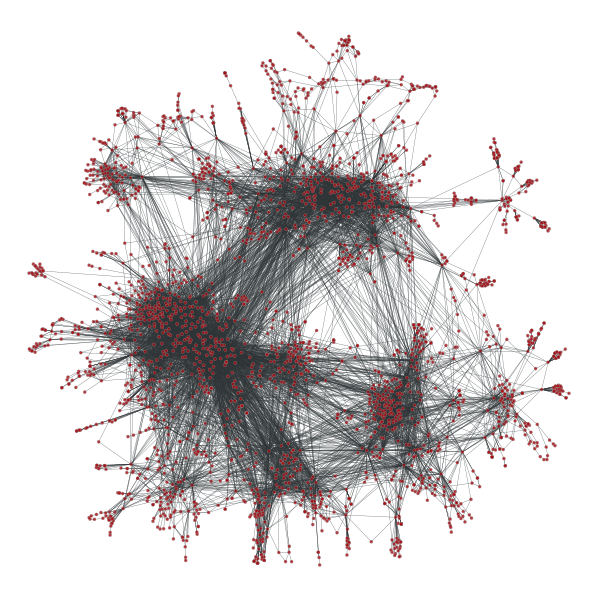} &
    \includegraphics[width=0.33\columnwidth]{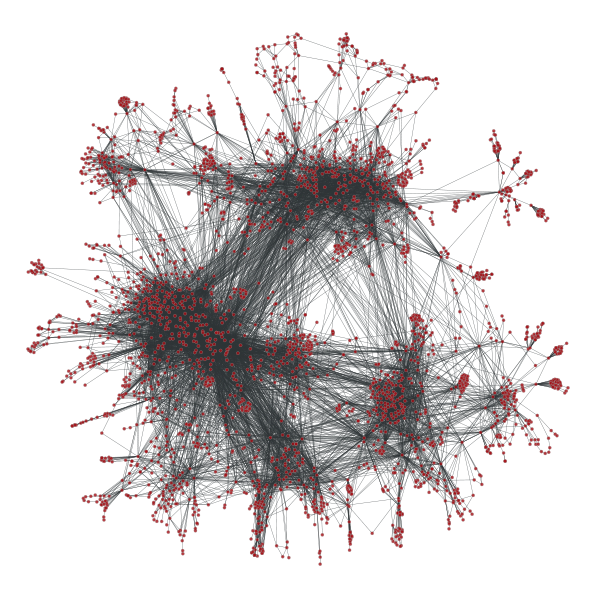} &
    \includegraphics[width=0.33\columnwidth]{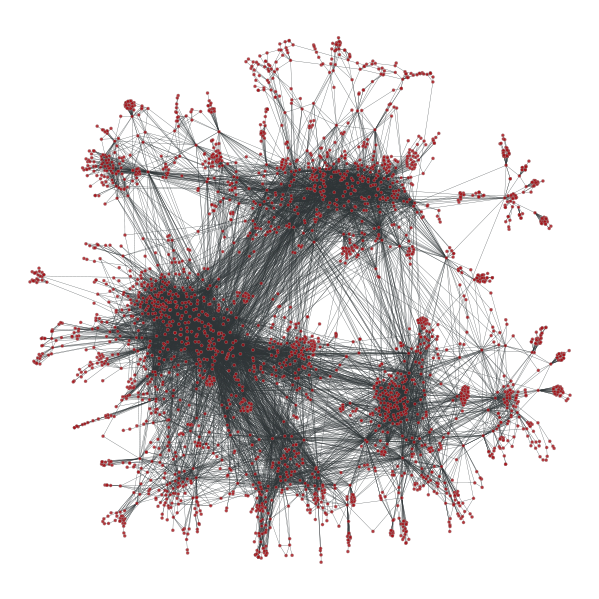} &
    \includegraphics[width=0.33\columnwidth]{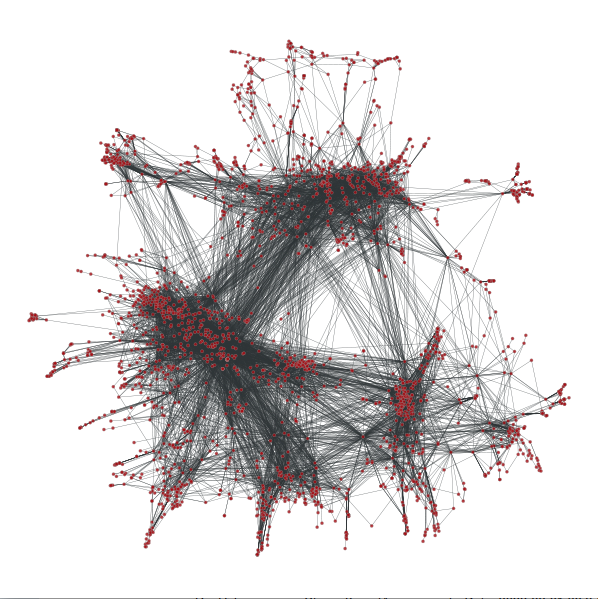} \\ \hline
    \multicolumn4{|c|}{GION\_7} \\ \hline
    \includegraphics[width=0.33\columnwidth]{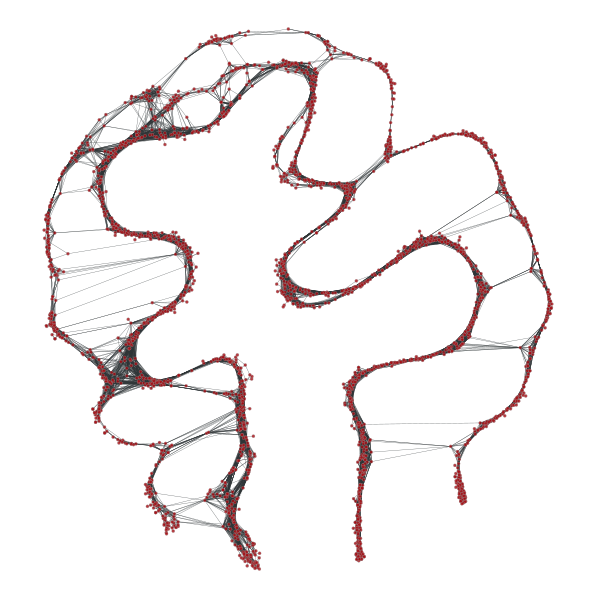} &
    \includegraphics[width=0.33\columnwidth]{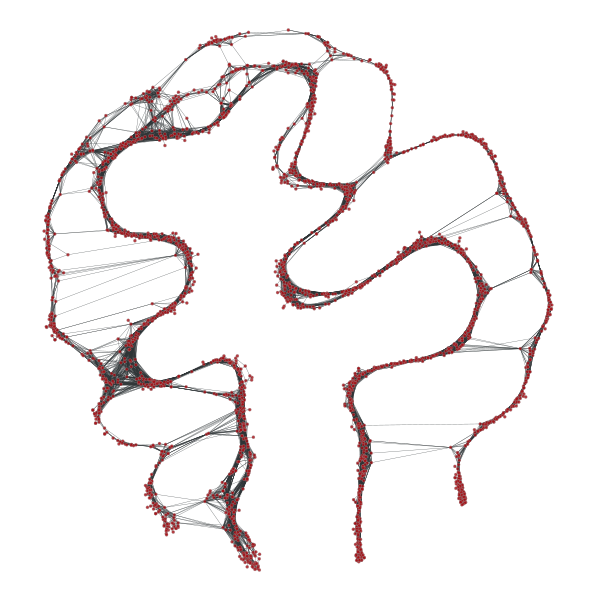} &
    \includegraphics[width=0.33\columnwidth]{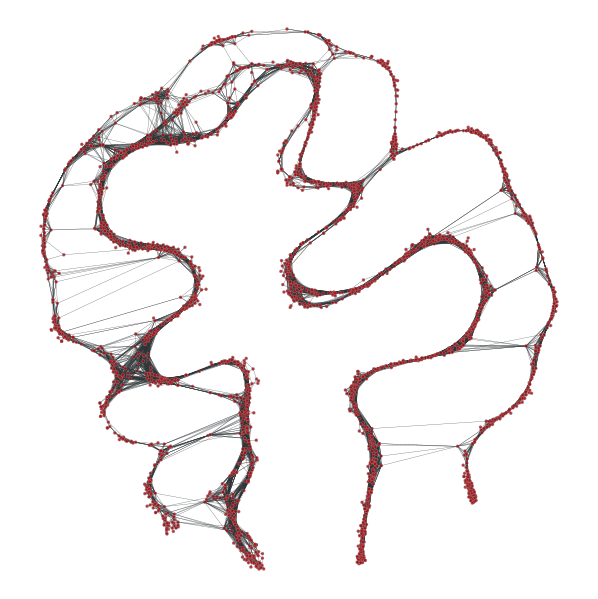} &
    \includegraphics[width=0.33\columnwidth]{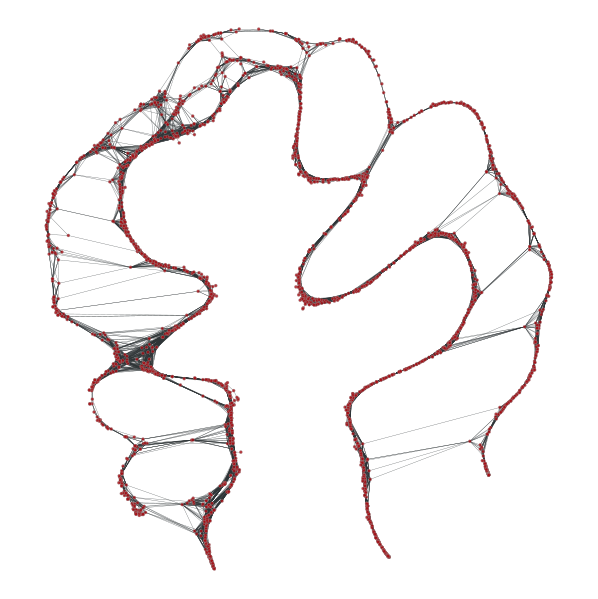} \\ \hline
    \end{tabular}
    \label{tab:viscomp}
\end{table*}

\begin{table*}[h!]
    \centering
    \caption{
    Visual comparison: on graphs GION\_1 and GION\_3, L-tsNET produces less distortion of long paths and cycles compared to the other algorithms, producing better neighborhood preservation and stress.
    }
    \begin{tabular}{|c|c|c|c|}
    \hline
    tsNET & BH-tsNET & FIt-tsNET & L-tsNET \\ \hline
    \multicolumn{4}{|c|}{GION\_1} \\ \hline
    \includegraphics[width=0.33\columnwidth]{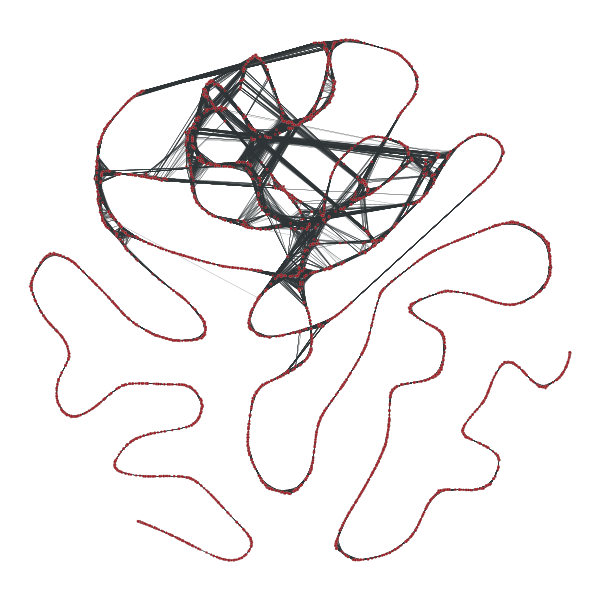} &
    \includegraphics[width=0.33\columnwidth]{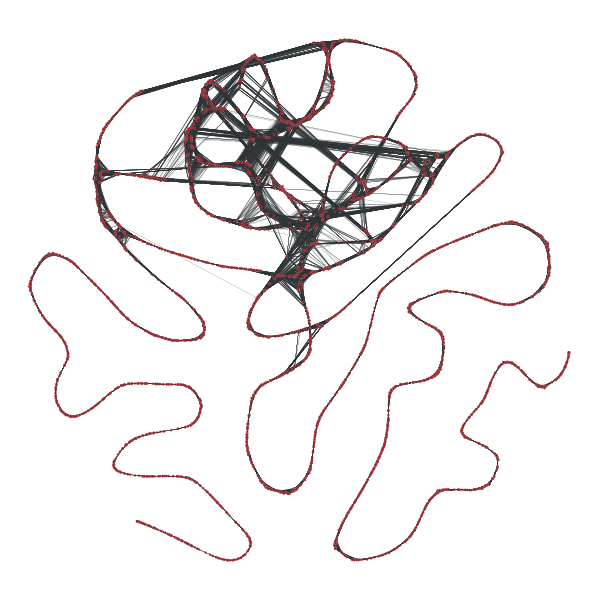} &
    \includegraphics[width=0.33\columnwidth]{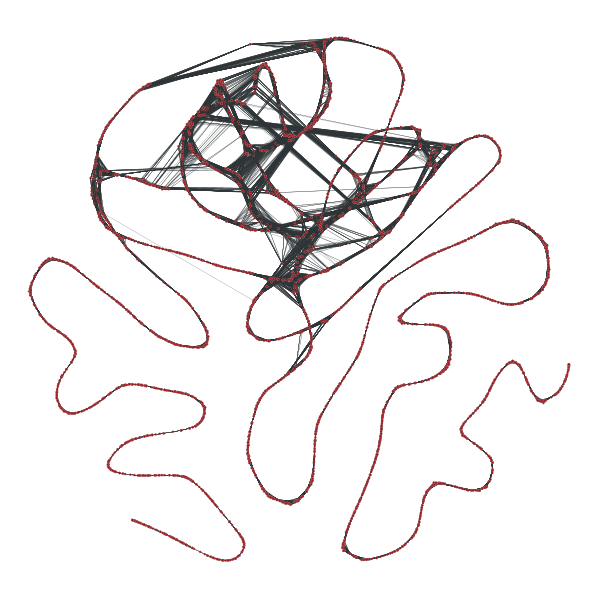} &
    \includegraphics[width=0.33\columnwidth]{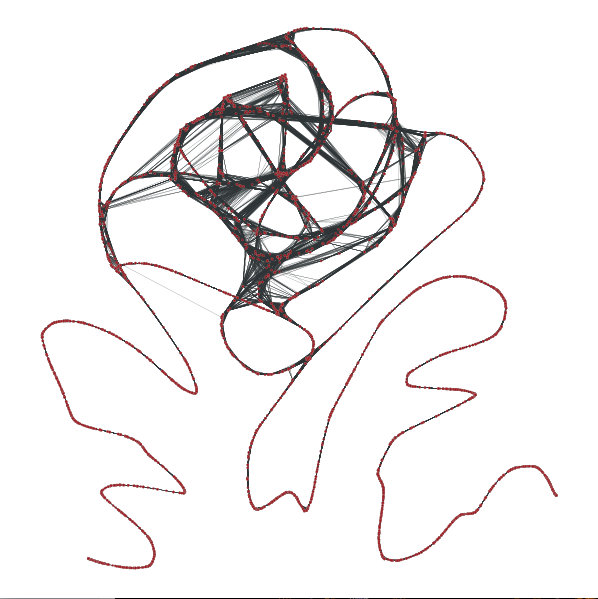}  \\ \hline
    \multicolumn{4}{|c|}{GION\_3} \\ \hline
    \includegraphics[width=0.33\columnwidth]{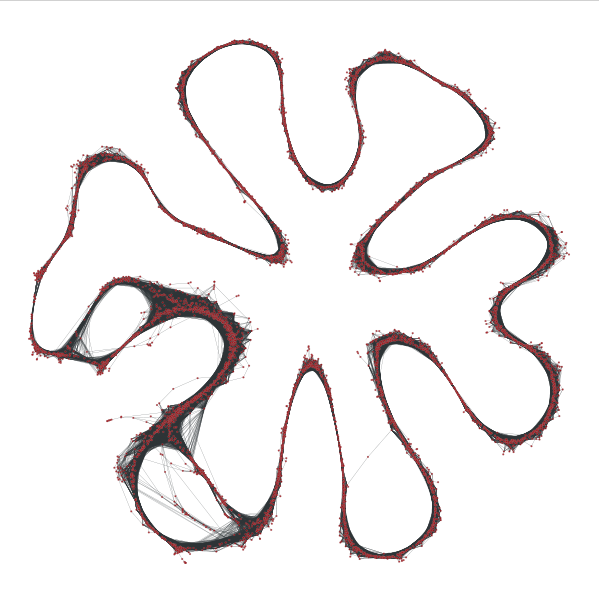} &
    \includegraphics[width=0.33\columnwidth]{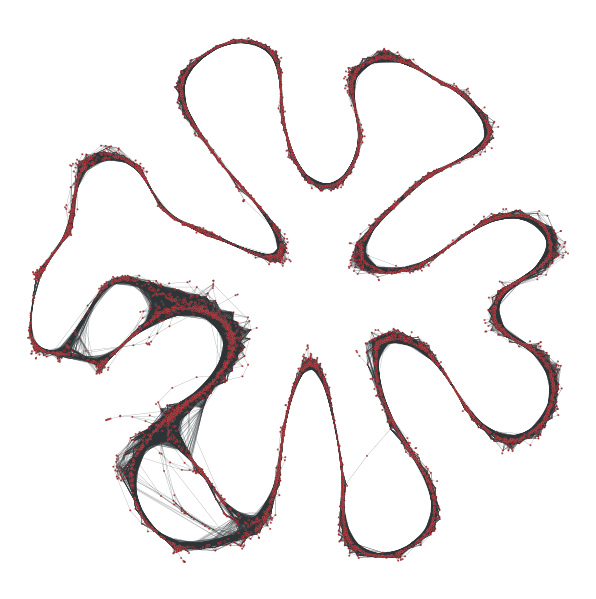} &
    \includegraphics[width=0.33\columnwidth]{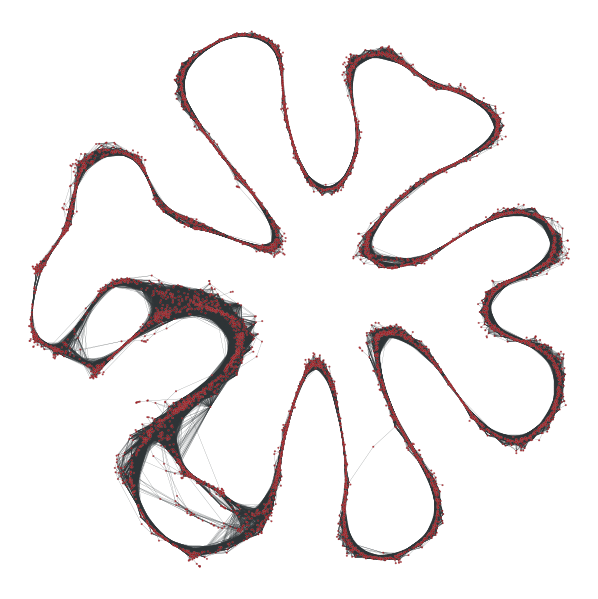} &
    \includegraphics[width=0.33\columnwidth]{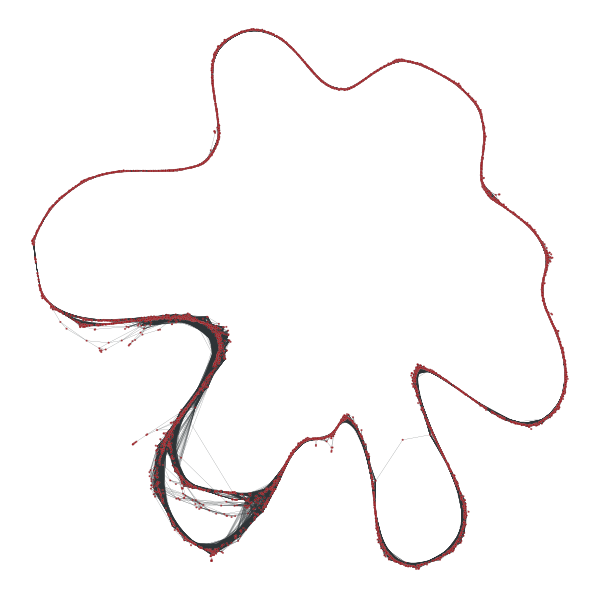}  \\ \hline
    \end{tabular}
    \label{tab:viscomp_good}
\end{table*}

\emph{In summary, \texttt{BH-tsNET}, \texttt{FIt-tsNET}, and \texttt{L-tsNET} obtain about the same quality drawings as \texttt{tsNET} on neighborhood preservation, stress, edge crossing, and 
shape-based metrics with much faster runtime,
confirming Hypothesis \ref{hyp:quality}.}

\begin{table*}[h!]
    \centering
    \caption{
    Visual comparison: For graphs migrations and as19990606, \texttt{BH-tsNET} and \texttt{FIt-tsNET} produce drawings that more clearly show clusters and stars compared to \texttt{tsNET}, untangling the structure rather than overly compacting them.
    }
    \begin{tabular}{|c|c|c|c|}
    \hline
    tsNET & BH-tsNET & FIt-tsNET & L-tsNET \\ \hline
    \multicolumn{4}{|c|}{migrations} \\ \hline
    \includegraphics[width=0.33\columnwidth]{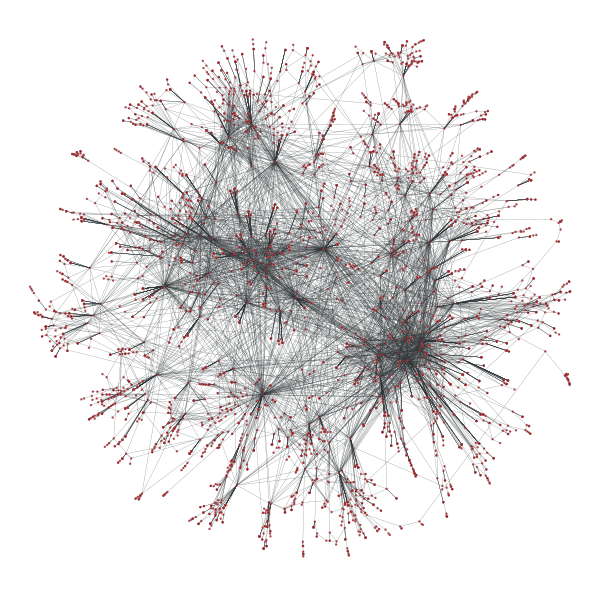} &
    \includegraphics[width=0.33\columnwidth]{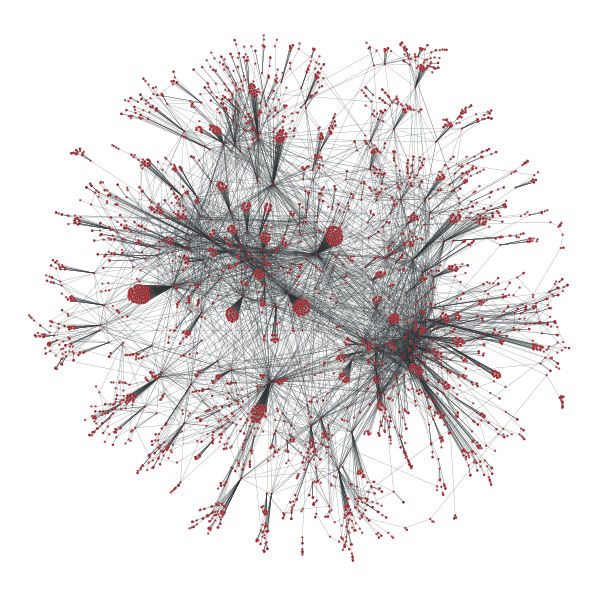} &
    \includegraphics[width=0.33\columnwidth]{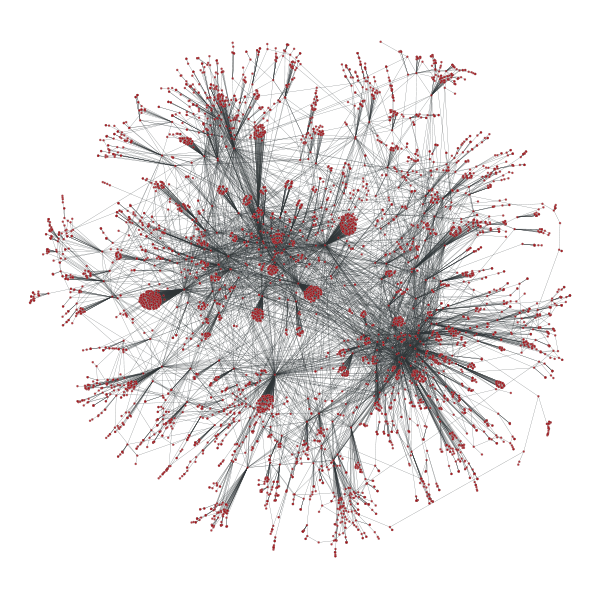} &
    \includegraphics[width=0.33\columnwidth]{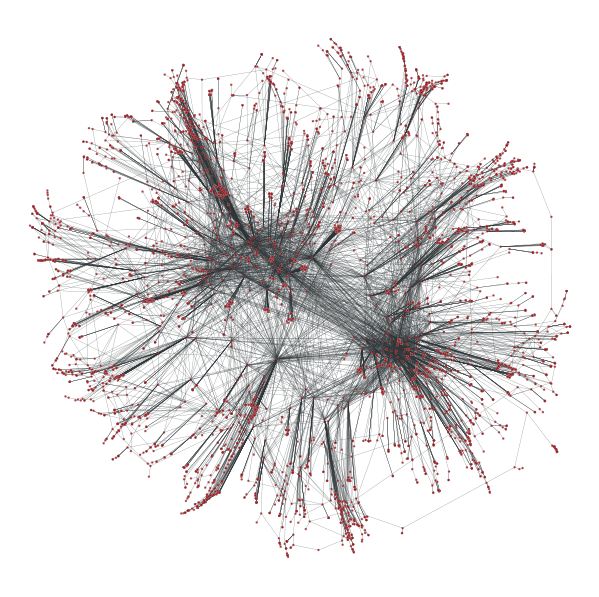} \\ \hline
    \multicolumn{4}{|c|}{as19990606} \\ \hline
    \includegraphics[width=0.33\columnwidth]{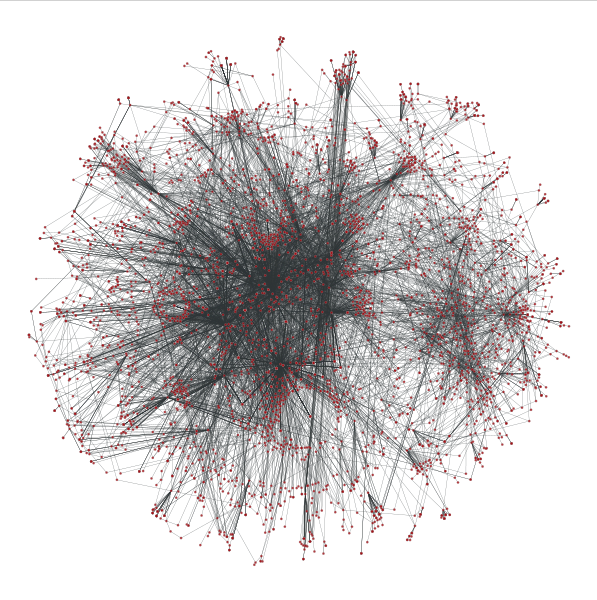} &
    \includegraphics[width=0.33\columnwidth]{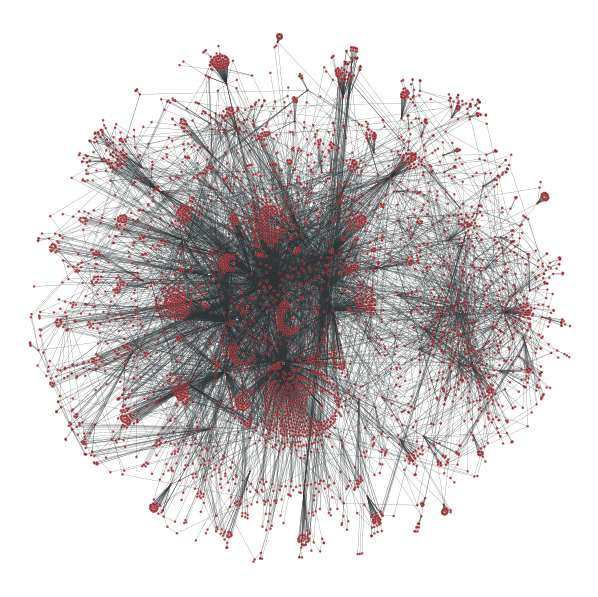} &
    \includegraphics[width=0.33\columnwidth]{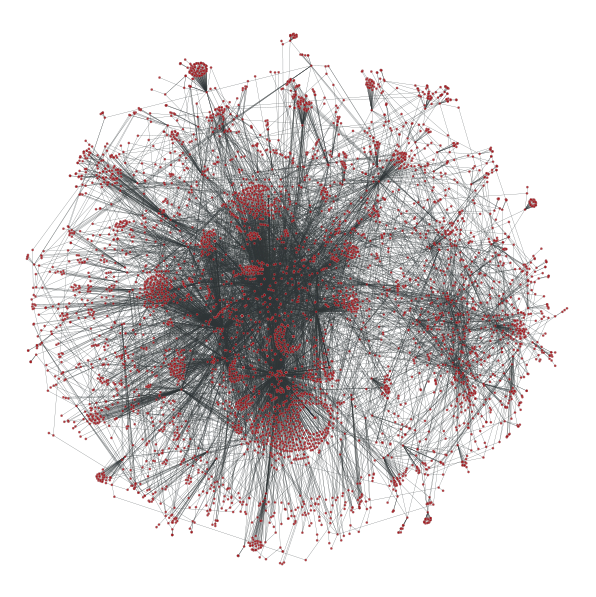} &
    \includegraphics[width=0.33\columnwidth]{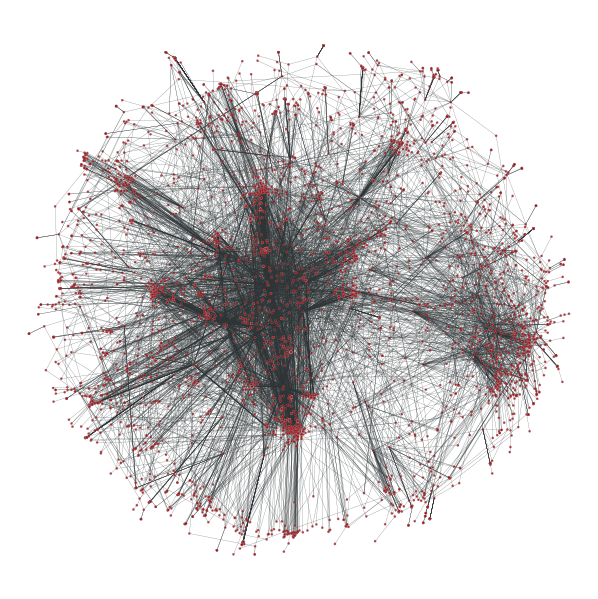} \\ \hline
    \end{tabular}
    \label{tab:viscomp_diff}
\end{table*}

\subsection{Visual Comparison}

In addition to quantitative evaluation with quality metrics, 
We also qualitatively compare the quality of drawings produced by each algorithm with visual comparison.

For visual comparison, \texttt{BH-tsNET}, \texttt{FIt-tsNET}, and \texttt{L-tsNET} produce almost the same drawings as \texttt{tsNET}, maintaining the same visual quality with much faster runtime. 
Table \ref{tab:viscomp} shows detailed visual comparisons of drawings 
on three graphs (mesh graph 3elt, scale-free graph oflights, GION graph GION\_7), where all algorithms produce visually almost identical drawings. 

Surprisingly, \texttt{L-tsNET} produces better quality drawings than the other three algorithms for some larger graphs with long diameters.
Table \ref{tab:viscomp_good} 
shows examples where \texttt{L-tsNET} produce drawings with better quality metrics, most notably neighborhood preservation, than the other three algorithms. 
For GION\_1, the drawing by \texttt{L-tsNET} has fewer bends on the long paths, leading to better neighborhood preservation and less stress. 
The improvement is even more stark for GION\_3, with the long cycle drawn even smoother with much fewer excessive bends than the other three algorithms, resulting in notable improvement in neighborhood preservation and shape-based metrics.

Furthermore, for some scale-free graphs, \texttt{BH-tsNET} and \texttt{FIt-tsNET} produce drawings untangling locally dense clusters/stars,  revealing the intra-cluster structure more clearly than \texttt{tsNET}. 
For example,  Table 
\ref{tab:viscomp_diff} 
shows the drawings of the migrations and as19990606 graphs, where several clusters and stars drawn too compactly by \texttt{tsNET} are nicely untangled by \texttt{BH-tsNET} and \texttt{FIt-tsNET}. 
Although this leads to a lower shape-based metric due to degree one vertices being drawn closer to each other than the high-degree hubs they are actually connected to, the untangling reveals star structures more clearly in \texttt{BH-tsNET} and \texttt{FIt-tsNET} than \texttt{tsNET}.

\emph{In summary, visual comparison shows that \texttt{BH-tsNET}, \texttt{FIt-tsNET}, and \texttt{L-tsNET} compute almost identical drawings to \texttt{tsNET},
confirming Hypothesis \ref{hyp:quality}.
Moreover, for large graphs with long diameters, \texttt{L-tsNET} produces better drawings with much fewer bends in cycles and paths, 
than the other algorithms, while for scale-free graphs with locally dense clusters, \texttt{BH-tsNET} and \texttt{FIt-tsNET} produce better drawings, more clearly showing the intra-cluster structures than \texttt{tsNET}.
}

\subsection{Summary and Discussion}

In summary, our extensive experiments support Hypotheses \ref{hyp:runtime_base}, \ref{hyp:runtime_comp}, and \ref{hyp:runtime_lts}: \texttt{BH-tsNET}, \texttt{FIt-tsNET}, and \texttt{L-tsNET} run significantly faster than \texttt{tsNET}, \texttt{FIt-tsNET} runs  faster than \texttt{BH-tsNET}, and  \texttt{L-tsNET} is the fastest. 
Surprisingly, our algorithms compute almost the same quality drawings as \texttt{tsNET}, in terms of quality metrics and visual comparison, 
supporting  Hypothesis \ref{hyp:quality},
as seen in Figure \ref{fig:metrics_drg_avg_all}.

With the overall runtime of \texttt{FIt-tsNET} being $O(n \log n)$ and \texttt{L-tsNET} being $O(n)$, it might be expected that \texttt{L-tsNET} would have an even larger runtime improvement over \texttt{FIt-tsNET} than the observed 66.3\% improvement. 
However, it should be noted that the cost minimization iteration of \texttt{FIt-tsNET} is reduced to $O(n)$ time by the FFT-accelerated interpolation, which brings the runtime of \texttt{FIt-tsNET} closer to \texttt{L-tsNET}. 
Regardless, \texttt{L-tsNET} still runs significantly faster with similar quality metrics, demonstrating its strengths.
The results of the comparison experiments are  summarized as follows:

\begin{itemize}
    \item On average, \texttt{BH-tsNET}, \texttt{FIt-tsNET}, and \texttt{L-tsNET} run 93.5\%, 96\%, and 98.6\% faster than \texttt{tsNET}. 
    \item Overall, \texttt{BH-tsNET}, \texttt{FIt-tsNET}, and \texttt{L-tsNET}  compute almost the same quality drawings as \texttt{tsNET} on neighborhood preservation, stress, shape-based metrics, and edge crossing metrics.
    
    \item Visual comparisons confirm that our algorithms generally compute almost the same quality drawings as \texttt{tsNET}, consistent with quality metrics comparison.
\end{itemize}

While in general the drawings produced by all four algorithms are very similar, \texttt{L-tsNET} produce notably different drawings than the other algorithms for graphs with certain properties. 
For example, on some large graphs with long diameters, \texttt{L-tsNET} produce drawings with much fewer bends on the long paths and cycles, resulting in better quality metrics for neighborhood preservation, stress, and edge crossing than the other algorithms (see Table \ref{tab:viscomp_good}).

This may be due to the interpolation for entropy. 
In \texttt{tsNET}, entropy pushes the vertices to spread out more evenly throughout the drawing space, which may cause long paths and cycles to coil up to fill the space. 
With \texttt{L-tsNET}, the fewer entropy computations due to the interpolation may weaken this effect, removing the excessive bends.

In other cases, \texttt{L-tsNET} compute drawings very similar to \texttt{tsNET} while \texttt{BH-tsNET} and \texttt{FIt-tsNET} compute drawings that are notably different to \texttt{tsNET}. 
For example, on some scale-free graphs with lower density,  
\texttt{BH-tsNET} and \texttt{FIt-tsNET} compute drawings with dense clusters and large stars being untangled better than \texttt{tsNET} (see Table \ref{tab:viscomp_diff}). 

This may be due to the use of quadtree for KL divergence gradient,
since only the interactions between geometrically close vertices are computed exactly, while those between distant vertices are only approximated.
Namely, the ``repulsion'' between nearby vertices in dense clusters may become comparatively larger than the approximated repulsion from far away vertices, resulting in the vertices in the dense area being spread out. While the drawings by \texttt{L-tsNET} do not share this property, the scores on all metrics are still very similar to \texttt{tsNET}, ensuring a good level of quality is still preserved overall.

Another surprising finding is that \texttt{L-tsNET} obtains lower edge crossings than \texttt{BH-tsNET} and \texttt{FIt-tsNET}, and almost the same as \texttt{tsNET}. 
This may show trade-offs between using quadtree-based entropy and interpolation-based entropy, where the former performs better for showing the intra-cluster structures of dense clusters, while the latter is better for drawings with fewer edge crossings.

\begin{table*}[h!]
    \centering
    \caption{Visual comparison of \texttt{DRGraph} to \texttt{tsNET}, \texttt{BH-tsNET}, \texttt{FIt-tsNET}, and \texttt{L-tsNET}. On scale-free graphs, our algorithms tend to spread out the vertices more evenly overall, while \texttt{DRGraph} tends to produce a ``blob'' with some very long edges jutting out. On mesh graphs, our algorithms tend to preserve the regular structure of the mesh and untangle the structure better than \texttt{DRGraph}.}
    \begin{tabular}{|c|c|c|c|c|}
    \hline
    tsNET & BH-tsNET & FIt-tsNET & L-tsNET & DRGraph \\ \hline
    \multicolumn{5}{|c|}{CA-GrQc} \\ \hline
    \includegraphics[width=0.33\columnwidth]{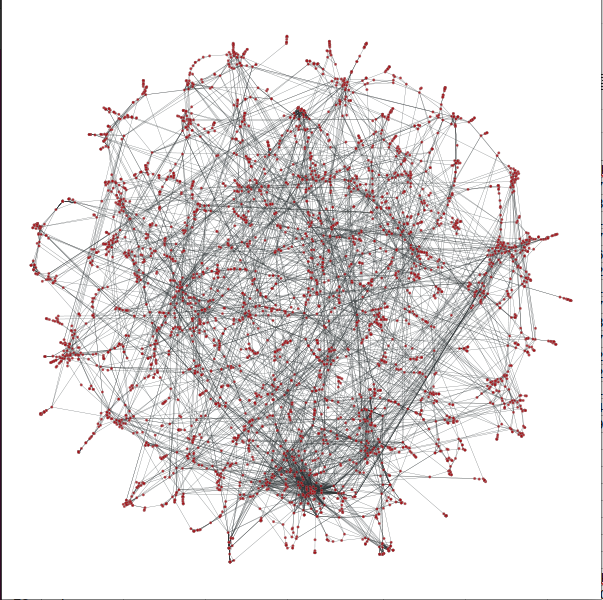} &
    \includegraphics[width=0.33\columnwidth]{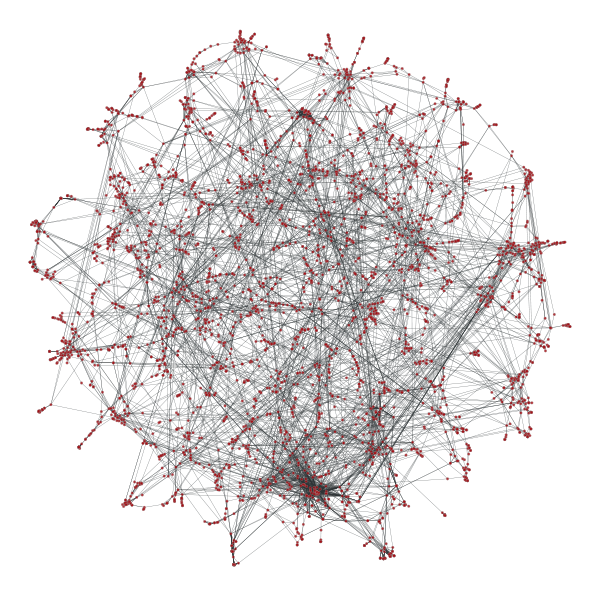} &
    \includegraphics[width=0.33\columnwidth]{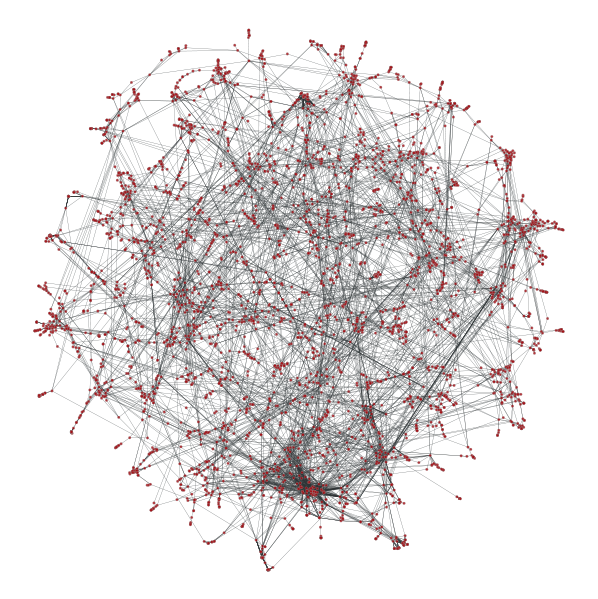} &
    \includegraphics[width=0.33\columnwidth]{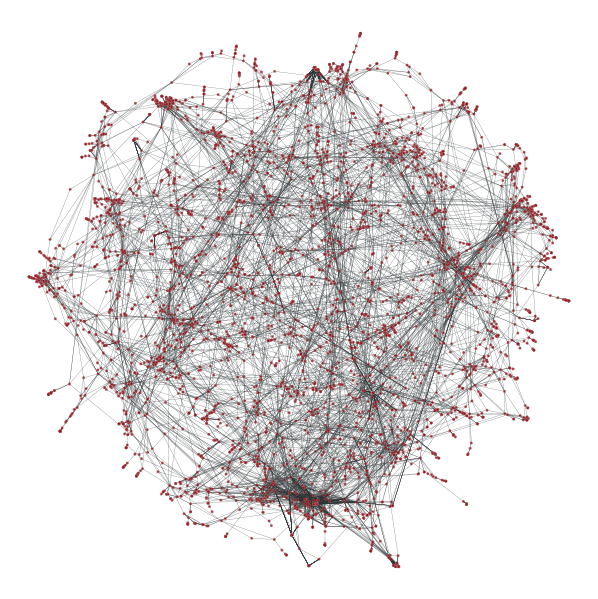} &
    \includegraphics[width=0.33\columnwidth]{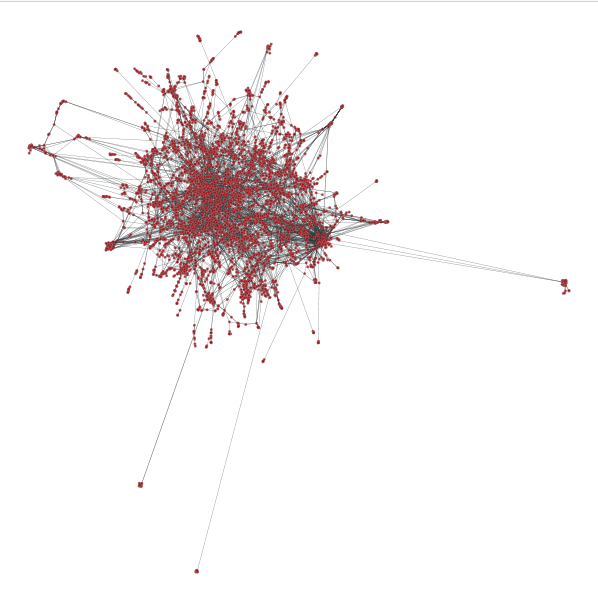} \\ \hline
    \multicolumn{5}{|c|}{lastfm\_asia} \\ \hline
    \includegraphics[width=0.33\columnwidth]{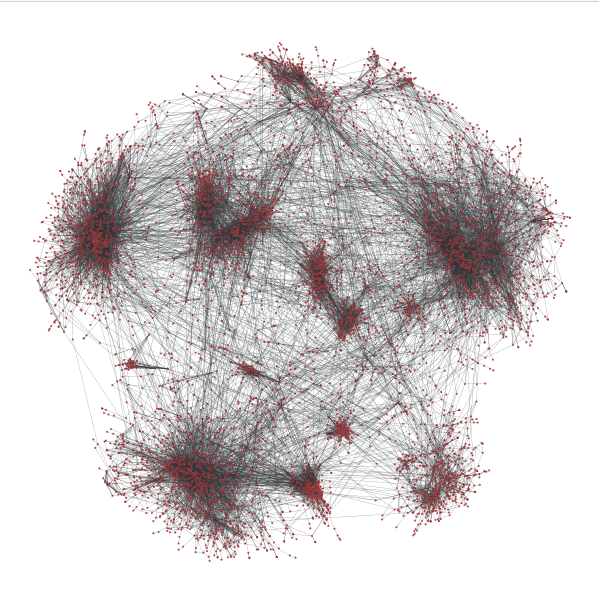} &
    \includegraphics[width=0.33\columnwidth]{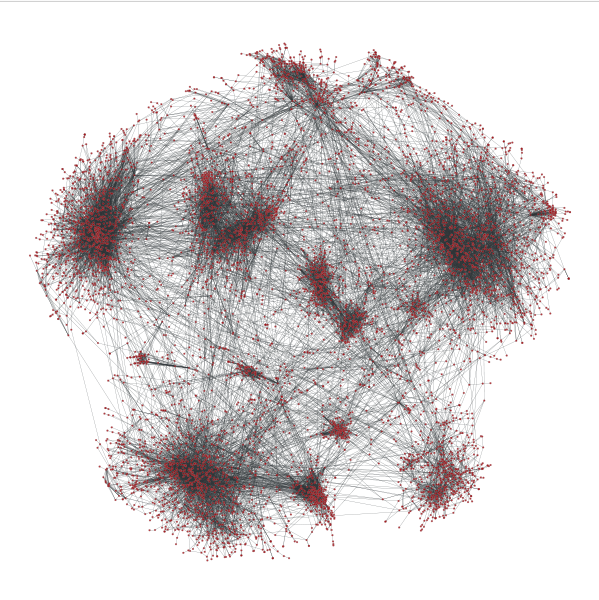} &
    \includegraphics[width=0.33\columnwidth]{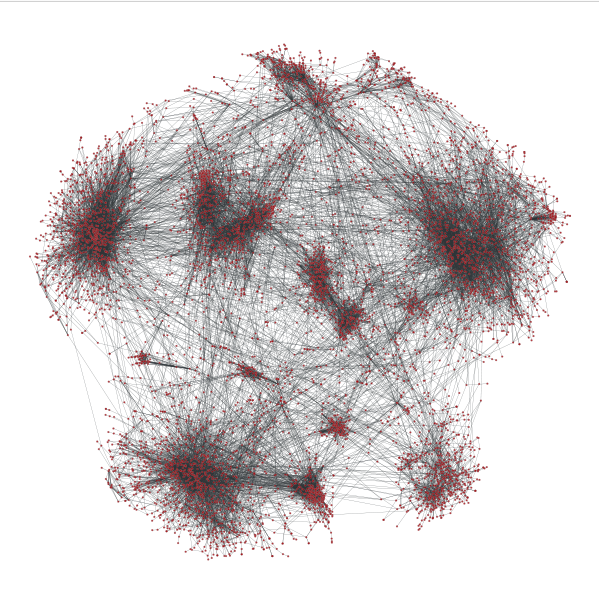} &
    \includegraphics[width=0.33\columnwidth]{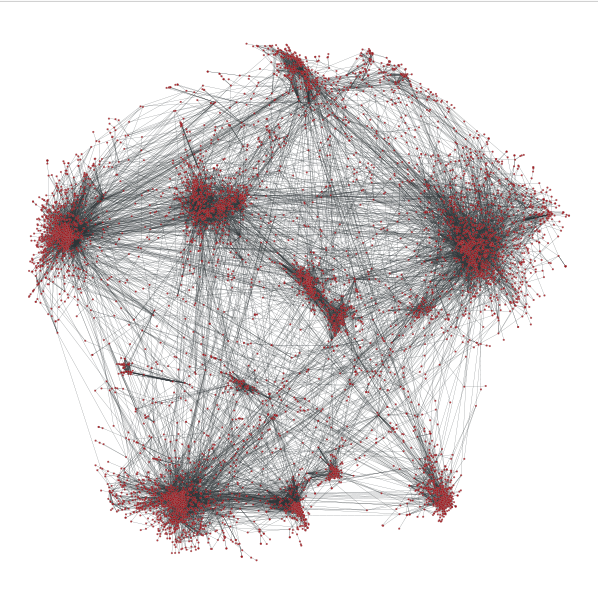} &
    \includegraphics[width=0.33\columnwidth]{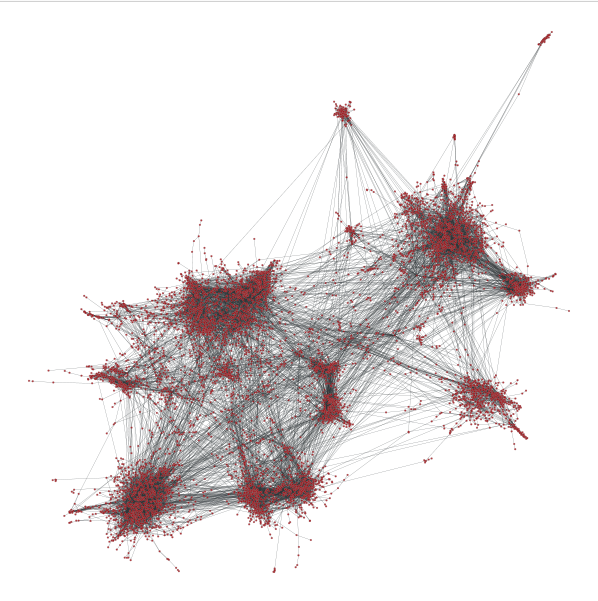} \\ \hline
    \multicolumn{5}{|c|}{bcsstk09} \\ \hline
    \includegraphics[width=0.33\columnwidth]{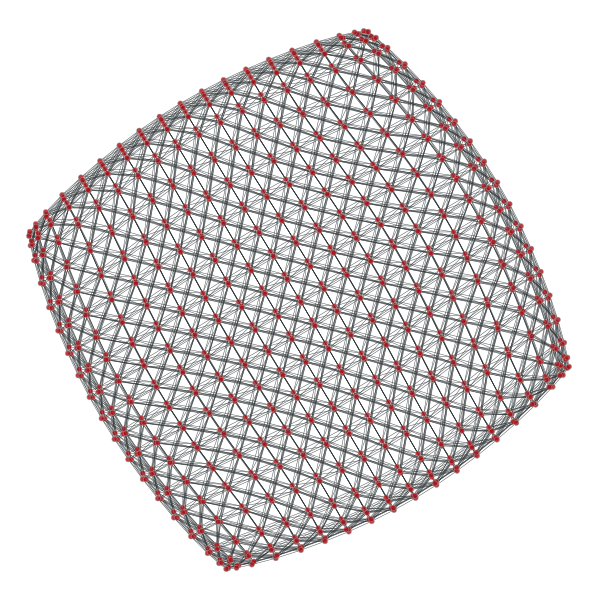} &
    \includegraphics[width=0.33\columnwidth]{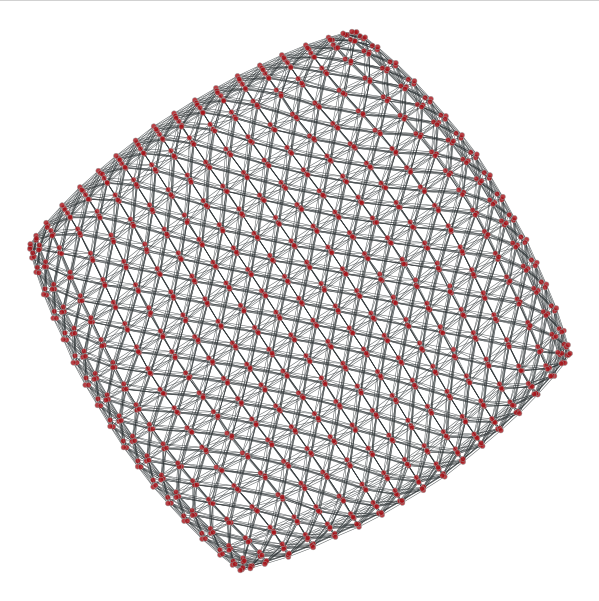} &
    \includegraphics[width=0.33\columnwidth]{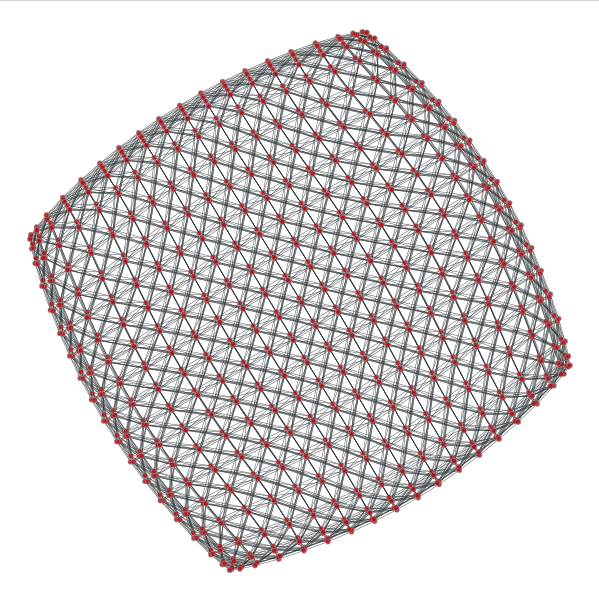}&
    \includegraphics[width=0.33\columnwidth]{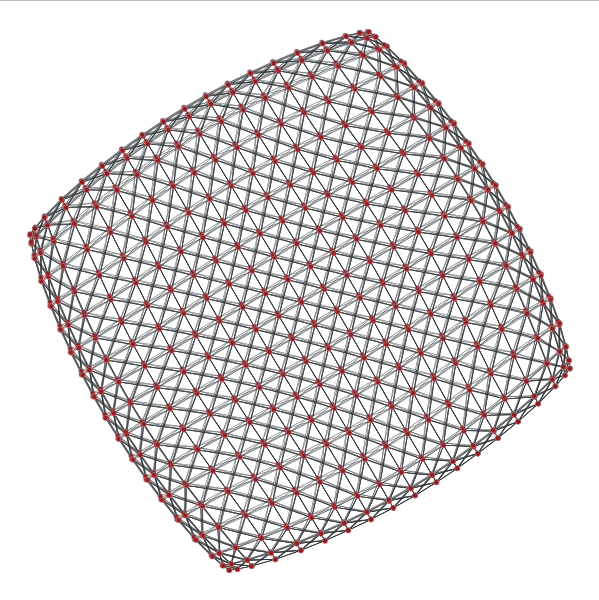} &
    \includegraphics[width=0.33\columnwidth]{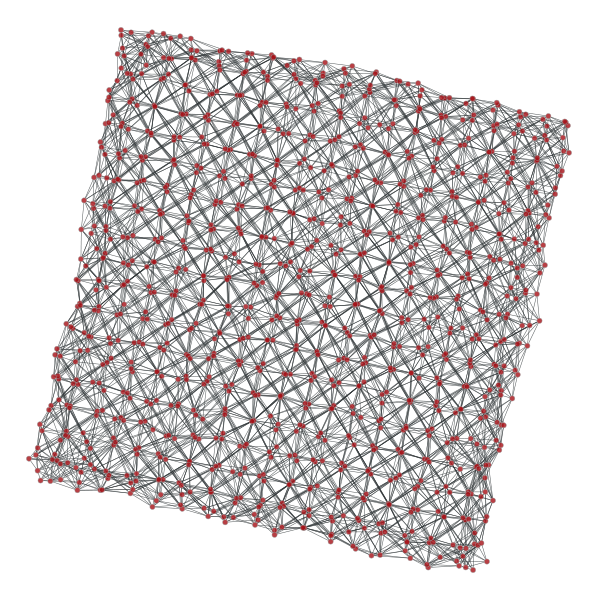} \\ \hline
    \multicolumn{5}{|c|}{data} \\ \hline
    \includegraphics[width=0.33\columnwidth]{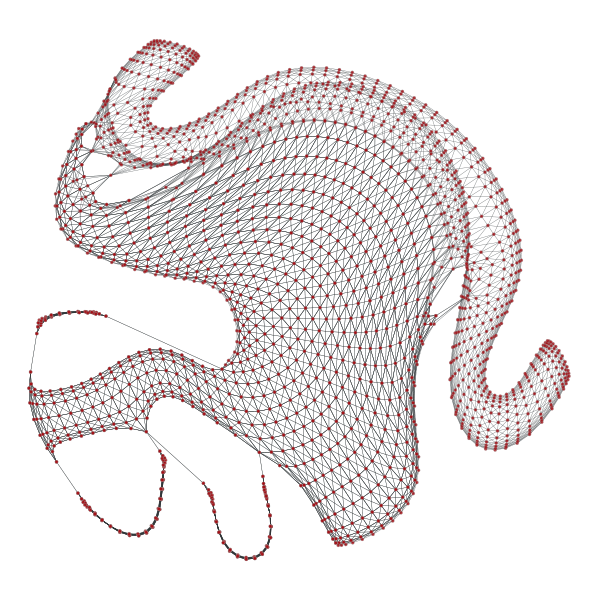} &
    \includegraphics[width=0.33\columnwidth]{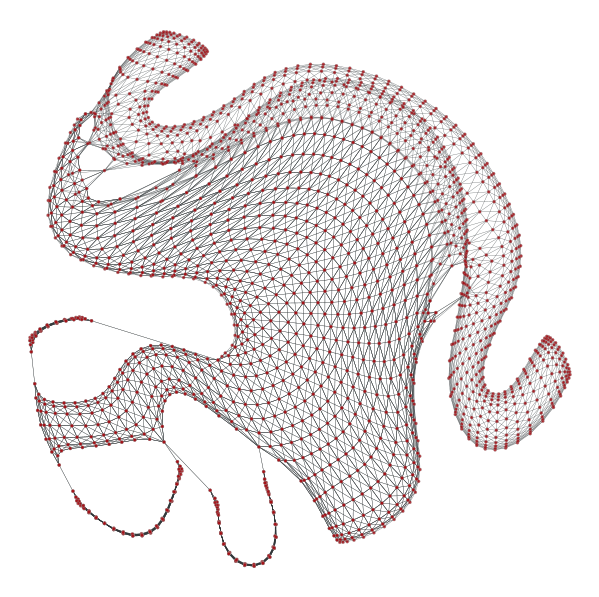} &
    \includegraphics[width=0.33\columnwidth]{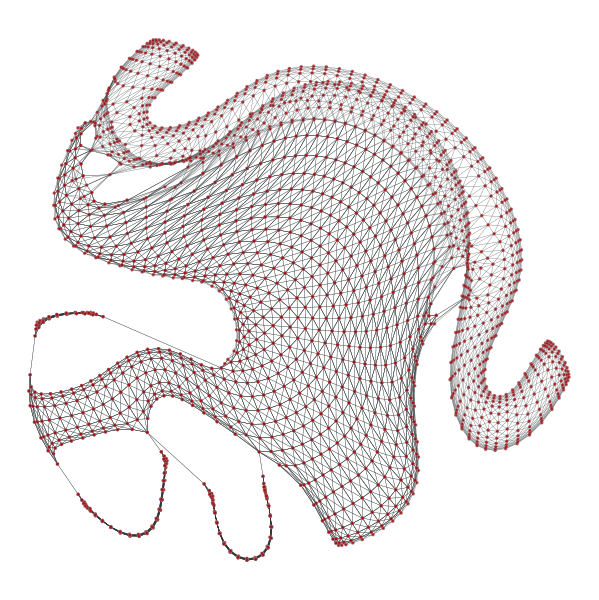} &
    \includegraphics[width=0.33\columnwidth]{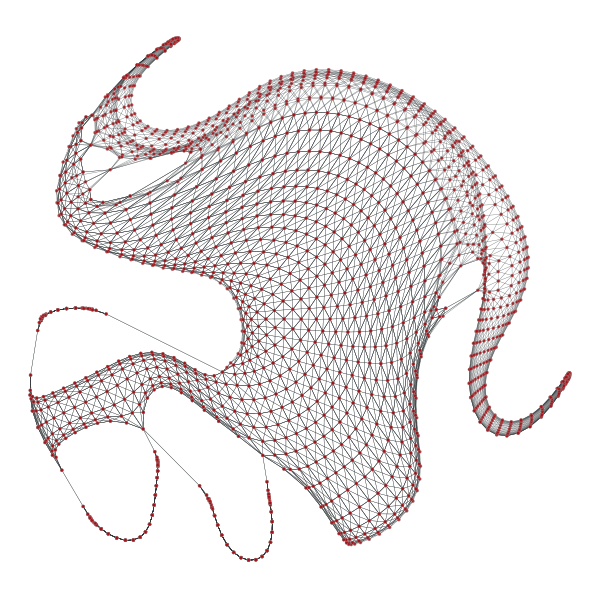} &
    \includegraphics[width=0.33\columnwidth]{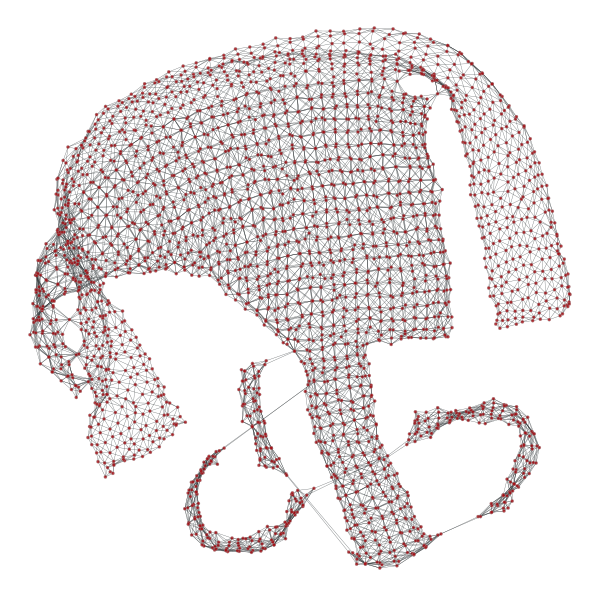} \\ \hline
    \end{tabular}
    \label{tab:viscomp_drg}
\end{table*}

\section{Experiment 2: Comparison with DRGraph}
\label{sec:drg}

In this section, we present a comparison experiment of our fast \texttt{tsNET} algorithms to \texttt{DRGraph}~\cite{zhu2020drgraph},  
a \texttt{t-SNE}-based graph drawing algorithm runs in $O(n + m)$ time using negative sampling, to compare the strengths of each algorithm on runtime, quality metrics, and visual comparison
While \texttt{DRGraph} also uses  \texttt{t-SNE}, its objective function only optimizes the KL divergence without the other two terms (compression and entropy) used by the objective function of \texttt{tsNET}.

We focus on the comparison of our algorithms to \texttt{DRGraph} since it is the most related  (i.e., based on \texttt{t-SNE}), unlike other graph drawing algorithms with vastly different optimization criteria, such as force-directed or stress-based algorithms.

For example, recent fast graph drawing algorithms are based on the {\em force-directed} algorithms or {\em stress minimization} algorithms.
The SubLinearForce~\cite{meidiana2024sublinearforce} is a framework for sublinear-time force-directed algorithms using various sampling methods.
The SSM and SSGD algorithms~\cite{meidiana2021stress} are sublinear-time stress minimization algorithms, using pivots and sampling.
The t-FDP algorithm~\cite{zhong2023force} is a linear-time force-directed algorithm using the t-distribution, where the computation of the force is accelerated using FFT-based interpolation.
However, the objective function of t-FDP is vastly different from \texttt{t-SNE}-based algorithms (i.e., computing the KL divergence between high- and low-dimensional similarities), based on the force-directed model with modified repulsion and attraction forces.

Note that \texttt{DRGraph} was already compared to \texttt{tsNET}, running significantly faster than \texttt{tsNET}, while computing drawings with similar quality~\cite{zhu2020drgraph}. 
However, the comparison was done against ``base'' \texttt{tsNET} (i.e., with random initialization) rather than $tsNET*$ (i.e., with PMDS initialization), which was shown to have a better performance.
Therefore, in our experiments, we compare our algorithms \texttt{BH-tsNET}, \texttt{FIt-tsNET}, and \texttt{L-tsNET} with PMDS initialization to \texttt{DRGraph} using the default parameters suggested in~\cite{zhu2020drgraph}.

\subsection{Runtime  Comparison} 

Figure \ref{fig:metrics_drg_avg_all}(a) shows the average runtime of our algorithms and \texttt{DRGraph}, where  
\texttt{DRGraph} runs slightly faster than \texttt{L-tsNET} on average.

Note that \texttt{DRGraph} is implemented fully in C++, unlike our algorithms implemented partly in Python. 
Therefore, the difference in language efficiency may have factored into the slightly faster runtime of \texttt{DRGraph}.

\subsection{Quality Metrics Comparison}

Figures \ref{fig:metrics_drg_avg_all}(b)-(e) show the average quality metrics comparison between our algorithms and \texttt{DRGraph},
where overall, our algorithms and \texttt{DRGraph} show similar performance on quality metrics.

For example, when we compare \texttt{L-tsNET} and \texttt{DRGraph},  
\texttt{L-tsNET} obtains slightly higher neighborhood preservation, while \texttt{DRGraph} obtains slightly higher shape-based metrics, on average. 
Meanwhile, both algorithms show similar performance on stress and edge crossing metrics, on average.

All our \texttt{BH-tsNET}, \texttt{FIt-tsNET}, and \texttt{L-tsNET} algorithms perform significantly better than \texttt{DRGraph} on the edge crossing metrics for mesh graphs, see Figure \ref{fig:metrics_drg_avg_all}(f). 
In particular, \texttt{DRGraph} computes drawings with 23.5\% higher edge crossing metrics than \texttt{L-tsNET} on mesh graphs. 
Moreover, \texttt{L-tsNET} obtains slightly higher shape-based metrics than \texttt{DRGraph} for mesh graphs.

\subsection{Visual Comparison} 

Table \ref{tab:viscomp_drg} shows examples of visual comparisons between our algorithms and \texttt{DRGraph}, where for {\em mesh graphs},  our algorithms notably outperform \texttt{DRGraph}.
Specifically, our algorithms \texttt{BH-tsNET}, \texttt{FIt-tsNET}, and \texttt{L-tsNET}  produce significantly better quality drawings on {\em mesh graphs} than \texttt{DRGraph} with significantly fewer edge crossings, successfully untangling and showing the regular mesh structure of the graph. 

For example, on bcsstk09, even after setting the parameter $b=3$ as suggested for grid graphs~\cite{zhu2020drgraph}, \texttt{DRGraph} fails to display the regular mesh structure to the same extent as our algorithms, 
with ``zig-zags'' on the grid structure leading to higher edge crossings and lower neighborhood preservation. 
For the data graph, even on the best choice of $b$, \texttt{DRGraph} also ``folds'' the mesh, 
leading to higher edge crossings, 
while our algorithms successfully unfold the mesh structure.

For {\em scale-free graphs}, such as CA-GrQc and lastfm\_asia, our algorithms and \texttt{DRGraph} compute drawings with different characteristics, emphasizing different strengths. 
Our algorithms tend to spread the vertices evenly throughout the drawing area, 
improving neighborhood preservation for these particular graphs
, while \texttt{DRGraph} tends to have vertices in the dense areas (i.e., ``hairballs'') closer together, 
improving stress in the dense areas of drawings. 
For {\em gion graphs}, \texttt{L-tsNET} and \texttt{DRGraph} show similar performance on average.

\subsection{Summary and Discussion}

Our experiments show different strengths for \texttt{L-tsNET} and \texttt{DRGraph}, where \texttt{DRGraph} runs slightly faster, however with worse quality on some graph classes. 
Therefore, we can summarize as follows:

\begin{itemize}
    \item 
    Both \texttt{L-tsNET} and \texttt{DRGraph} have the same linear time complexity. Due to C++ implementation, \texttt{DRGraph} runs slightly faster than \texttt{L-tsNET} implemented in Python.
    \item \texttt{L-tsNET} computes better quality drawings than \texttt{DRGraph} with comparable runtime, especially for graphs with mesh/regular structures.
\end{itemize}

The better performance of \texttt{L-tsNET} compared to \texttt{DRGraph} on mesh graphs, as well as the tendency to spread out ``hairball'' structures of scale-free graphs, may be due to the entropy term included in our algorithms, which is not included in \texttt{DRGraph}. 
Since entropy pushes points in the geometric space to be evenly spaced~\cite{gansner2012maxent}, it may contribute to producing drawings that display the regular structure of the mesh better.

\section{Conclusion}
\label{sec:conclusion}

We close the gap in the literature by significantly improving the time complexity of \texttt{tsNET} algorithm for graph drawing from $O(nm)$-time to $O(n \log n)$-time with \texttt{BH-tsNET} and  \texttt{FIt-tsNET} algorithms, and finally to $O(n)$-time with  \texttt{L-tsNET} algorithm. 

We formally prove the {\em theoretical} time complexity of our new algorithms to guarantee scalability.
Moreover, extensive evaluation with real-world large and complex benchmark data sets demonstrates that \texttt{BH-tsNET}, \texttt{FIt-tsNET}, and \texttt{L-tsNET} achieve significant runtime improvements over \texttt{tsNET} (93.5\%, 96\%, 98.6\%), while preserving almost the same quality on neighborhood preservation, stress, edge crossings and shape-based metrics, as well as visual comparison.

We also compare our algorithms to \texttt{DRGraph}~\cite{zhu2020drgraph}, a linear-time \texttt{t-SNE}-based graph drawing algorithm with a different optimization function. 
Experiments show that our algorithms perform especially well over \texttt{DRGraph} for graphs with regular structures, such as mesh graphs.

Future work includes improving the {\em theoretical} runtime to sublinear-time. 
Recent sublinear-time force-directed~\cite{meidiana2020sublinear,meidiana2024sublinearforce} and sublinear-time stress minimization algorithms~\cite{meidiana2021stress} shed light on this direction. 

Another direction is to improve {\em practical} runtime using parallelism such as GPU, for example utilizing \texttt{t-SNE-CUDA}~\cite{chan2018tsne} and \texttt{GPGPU-SNE}~\cite{pezzotti2020gpgpu}, and conduct experiments with extremely big and complex graphs.


\bibliographystyle{abbrv-doi-hyperref}
\bibliography{template}


\section{Appendix}

\begin{table}[h]
    \centering
    \caption{Data sets for experiments} 
    \vspace{-2mm}
    \subfloat[Benchmark graphs]{
    \begin{tabular}{|l|c|c|c|c|}
    \hline
        graph & $|V|$ & $|E|$ & $u$ & density \\ \hline
        G\_13\_0 & 1647 & 6487 & 300 & 3.94 \\ \hline
        soc\_h & 2000 & 16097 & 400 & 8.05 \\ \hline
        Block\_2000 & 2000 & 9912 & 40 & 4.96 \\ \hline
        G\_4\_0 & 2075 & 4769 & 100 & 2.30 \\ \hline
        oflights & 2905 & 15645 & 300 & 5.39 \\ \hline
        tvcg & 3213 & 10140 & 100 & 3.16 \\ \hline
        facebook & 4039 & 88234 & 1100 & 21.8 \\ \hline
        CA-GrQc & 4158 & 13422 & 40 & 3.23 \\ \hline
        EVA & 4475 & 4652 & 600 & 1.04 \\ \hline
        us\_powergrid & 4941 & 6594 & 40 & 1.33 \\ \hline
        as19990606 & 5188 & 9930 & 1200 & 1.91 \\ \hline
        migrations & 6025 & 9378 & 600 & 1.57 \\ \hline
        lastfm\_asia & 7624 & 27906 & 100 & 3.65\\ \hline
        CA-HepTh & 8638 & 24827 & 100 & 2.87 \\ \hline
        CA-HepPh & 11204 & 117649 & 500 & 10.5 \\ \hline
    \end{tabular}
    }
    \qquad
    \vspace{-2mm}
    \subfloat[GION graphs]{
    \begin{tabular}{|l|c|c|c|c|}
    \hline
        graph & $|V|$ & $|E|$ & $u$ & density \\ \hline
        GION\_2 & 1159 & 6424 & 40 & 5.54 \\ \hline
        GION\_5 & 1748 & 13957 & 100 & 7.98 \\ \hline
        GION\_6 & 1785 & 20459 & 100 & 11.5 \\ \hline
        GION\_7 & 3010 & 41757 & 100 & 13.9 \\ \hline
        GION\_8 & 4924 & 52502 & 100 & 10.7 \\ \hline
        GION\_1 & 5452 & 118404 & 200 & 21.7 \\ \hline
        GION\_4 & 5953 & 186279 & 200 & 31.3\\ \hline
        GION\_3 & 7885 & 427406 & 500 & 54.2\\ \hline
    \end{tabular}
    }
    \qquad
    \vspace{-2mm}
    \subfloat[Mesh graphs]{
    \begin{tabular}{|l|c|c|c|c|}
    \hline
        graph & $|V|$ & $|E|$ & $u$ & density \\ \hline
        cage8 & 1015 & 4994 & 40 & 4.92 \\ \hline
        bcsstk09 & 1083 & 8677 & 40 & 8.01 \\ \hline
        nasa1824 & 1824 & 18692 & 40 & 10.2 \\ \hline
        plat1919 & 1919 & 15240 & 40 & 7.94 \\ \hline
        sierpinski3d & 2050 & 6144 & 40 & 3.00 \\ \hline
        3elt & 4720 & 13722 & 40 & 2.91 \\ \hline
        crack & 10240 & 30380 & 40 & 2.97 \\ \hline
    \end{tabular}
    }
    \label{table:data}
    \vspace{-3mm}
\end{table}


\begin{figure*}[h]
    \centering
    \includegraphics[width=0.9\textwidth]{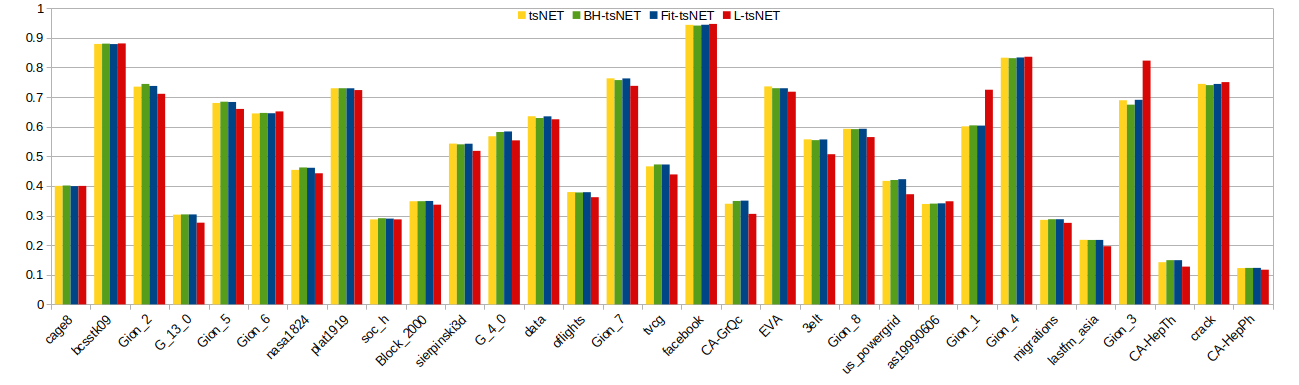}
    \caption{Neighborhood preservation metric (higher = better): On average, the metrics of drawings computed by \texttt{tsNET}, \texttt{BH-tsNET}, \texttt{FIt-tsNET} and \texttt{L-tsNET} are almost the same.}
    \label{fig:tsnet_np}
\end{figure*}


\begin{figure*}[h]
    \centering
    \includegraphics[width=0.9\textwidth]{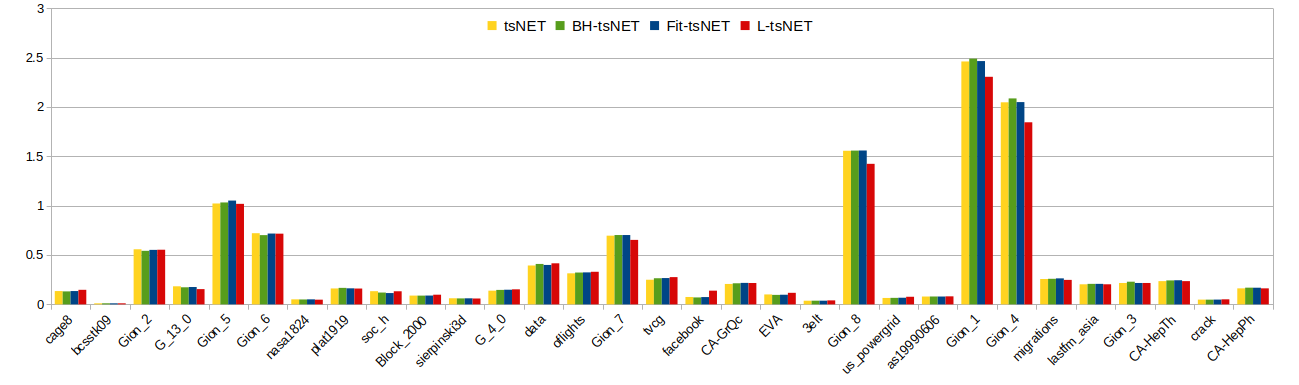}
    \caption{Stress metric (lower = better):  On average, the stress of drawings computed by \texttt{tsNET}, \texttt{BH-tsNET}, and \texttt{FIt-tsNET} 
    are almost the same.}
    \label{fig:tsnet_stress}
\end{figure*}


\begin{figure*}[h]
    \centering
    \includegraphics[width=0.9\textwidth]{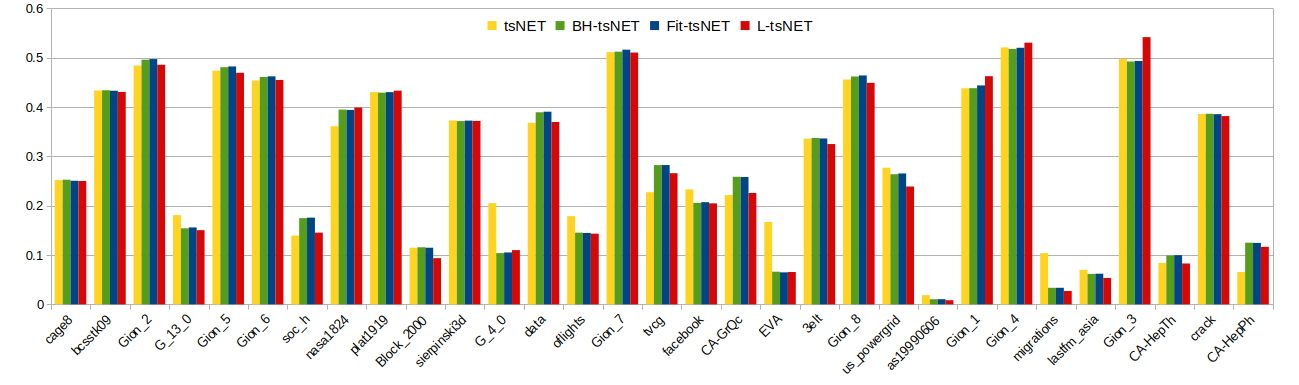}
    \caption{Shape-based metric (higher = better):  On average, the metrics of drawings computed by \texttt{tsNET}, \texttt{BH-tsNET}, \texttt{FIt-tsNET}, and \texttt{L-tsNET} tend to be similar, with \texttt{tsNET} sometimes obtaining better shape-based metrics on sparse (density $<$ 3) graphs and \texttt{BH-tsNET}, \texttt{FIt-tsNET}, and \texttt{L-tsNET} obtaining better shape-based metrics on some denser (density $>$ 3) graphs.}
    \label{fig:tsnet_shp}
\end{figure*}

\begin{figure*}[h]
    \centering
    \includegraphics[width=0.9\textwidth]{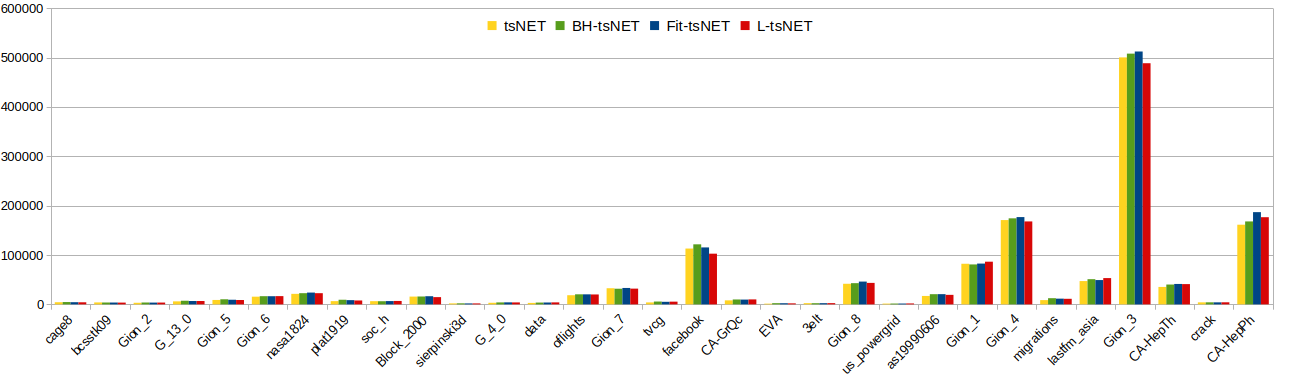}
    \caption{Edge crossing metric (lower = better): On average, the metrics of drawings computed by \texttt{tsNET}, \texttt{BH-tsNET}, \texttt{FIt-tsNET} and \texttt{L-tsNET} tend to be almost the same, with surprisingly lower edge crossing metrics obtained by \texttt{L-tsNET} on graphs with long diameters, e.g. Gion\_4, Gion\_3.}
    \label{fig:tsnet_crossing}
\end{figure*}

\end{document}